\documentclass{llncs}

\usepackage{epsfig}
\usepackage{algorithmic}
\usepackage{algorithm} 
\usepackage{graphicx}
\usepackage{amssymb}
\usepackage{fullpage}
\usepackage{color}
\usepackage{amsmath}
\usepackage{mathtools}
\usepackage{cite}

\usepackage{versions}
\excludeversion{INUTILE}
\includeversion{LONG}\excludeversion{SHORT}\includeversion{VLONG}
\markversion{TODO}

\widowpenalty=10000   
\clubpenalty=10000
\flushbottom

%

\begin{document}
\hyphenation{pro-blems sol-ving complexi-ty dia-gram poly-gonal analy-sis ins-tance des-cribed}

\pagestyle{headings}  
\addtocmark{        Input Order Adaptivity in Computational Geometry            }

\mainmatter              
\title{             Refining the Analysis of \\Divide and Conquer:\\ How and When            }
\titlerunning{      Refining the Analysis of \\Divide and Conquer:\\ How and When            } 
\author{
  J\'er\'emy Barbay\inst{1} 
  \and 
  Carlos Ochoa\inst{1}\thanks{Corresponding author.}
  \and
  Pablo P\'erez-Lantero\inst{2}
}

\institute{
  Departamento de Ciencias de la Computaci\'on, Universidad de Chile, Chile\\ 
  \email{jeremy@barbay.cl, cochoa@dcc.uchile.cl.}
  \and
  Escuela de Ingenier\'ia Civil Inform\'atica, Universidad de Valpara\'iso, Chile.\\ 
  \email{pablo.perez@uv.cl.}
}

\maketitle              

\begin{abstract}
Divide-and-conquer is a central paradigm for the design of algorithms, through which some fundamental computational problems, such as sorting arrays and computing convex hulls, are solved in optimal time within $\Theta(n\log{n})$ in the worst case over instances of size $n$.  A finer analysis of those problems yields complexities within $O(n(1 + \mathcal{H}(n_1, \dots, n_k))) \subseteq O(n(1{+}\log{k})) \subseteq O(n\log{n})$ in the worst case over all instances of size $n$ composed of $k$ ``easy'' fragments of respective sizes $n_1, \dots, n_k$ summing to $n$, where the entropy function $\mathcal{H}(n_1, \dots, n_k) = \sum_{i=1}^k{\frac{n_i}{n}}\log{\frac{n}{n_i}}$ measures the ``difficulty'' of the instance. We consider whether such refined analysis can be applied to other algorithms based on divide-and-conquer, such as polynomial multiplication, input-order adaptive computation of convex hulls in 2D and 3D, and computation of Delaunay triangulations.
\end{abstract}

\begin{center}
  \begin{minipage}{.9\textwidth}
    \noindent{\bf Keywords:} Adaptive Analysis, Convex Hull, Delaunay Triangulation, Divide and Conquer, Voronoi Diagrams.
  \end{minipage}
\end{center}

\section{Introduction}

The divide-and-conquer paradigm is used to solve central computational problems such as {\sc{Sorting}} arrays~\cite{1973-BOOK-TheArtOfComputerProgrammingVol3-Knuth}, computing {\sc{Convex Hulls}}~\cite{1986-JCom-TheUltimatePlanarConvexHullAlgorithm-KirkpatrickSeidel}, {\sc{Delaunay Triangulations}} and {\sc{Voronoi Diagrams}}~\cite{1985-ACMT-PrimitiveForTheManipulationOfGeneralSubdivisionAndTheComputationOfVoronoiDiagrams-GuibasStolfi} of points in the plane and in higher dimensions, {\sc{Matrix Multiplication}}~\cite{1969-NM-GaussianEliminationIsNotOptimal-Strassen}, {\sc{Polynomial Multiplication}}~\cite{1988-BOOK-NumericalRecipesInC-PressFlanneryTeukolskyVetterling}, {\sc{Half-Plane Intersection}}~\cite{1985-BOOK-ComputationalGeometryAnIntroduction-PreparataShamos}, among others. For some of these problems, this paradigm yields an optimal running time within $\Theta(n\log{n})$ in the worst case over instances of $n$ elements. 

An adaptive analysis of slight variants of some of these algorithms yields improved running times on large classes of instances. Those results can be refined~\cite{1976-JComp-SortingAndSearchingInMultisets-MunroSpira, 2013-TCS-OnCompressingPermutationsAndAdaptiveSorting-BarbayNavarro, 2009-FOCS-InstanceOptimalGeometricAlgorithms-AfshaniBarbayChan} up to complexities within $O(n(1 + \mathcal{H}(n_1, \dots, n_k))) \subseteq O(n(1{+}\log{k})) \subseteq O(n\log{n})$ in the worst case over all instances of size $n$ composed of $k$ ``easy'' fragments of respective sizes $n_1, \dots, n_k$ summing to $n$, where the entropy function $\mathcal{H}(n_1, \dots, n_k) = \sum_{i=1}^k{\frac{n_i}{n}}\log{\frac{n}{n_i}}$ measures the ``difficulty'' of the instance.
We describe here two examples for the problem of {\sc{Sorting}} arrays (see Section~\ref{sec:classification} for more).

\begin{itemize}
\item Munro and Spira~\cite{1976-JComp-SortingAndSearchingInMultisets-MunroSpira} showed that the algorithm {\tt{MergeSort}} can be adapted to sort a multiset $S$ of $n$ elements in time within $O(n(1 + \mathcal{H}(m_1, \dots, m_\sigma))) \subseteq O(n(1{+}\log{\sigma})) \subseteq O(n\log{n})$, where $\sigma$ is the number of distinct elements in $S$ and $m_1, \dots, m_\sigma$ (such that $\sum_{i=1}^\sigma {m_i}=n$) are the $\rho$ multiplicities of the distinct elements in $S$, respectively.

\item Taking advantage of the input order, Knuth~\cite{1973-BOOK-TheArtOfComputerProgrammingVol3-Knuth} considered sequences formed by \emph{runs} i.e., contiguous increasing subsequences, and described an algorithm sorting such sequences in time within $O(n(1+\log\rho))\subseteq O(n\lg n)$, where $\rho$ is the number of \emph{runs} in the sequence.  Barbay and Navarro~\cite{2013-TCS-OnCompressingPermutationsAndAdaptiveSorting-BarbayNavarro} improved the analysis of this algorithm in time within $O(n(1+\mathcal{H}(r_1, \dots, r_{\rho}))) \subseteq O(n(1{+}\log{\rho})) \subseteq O(n\log{n})$, where $r_1, \dots, r_{\rho}$ (such that $\sum_{i=1}^\rho {r_i}=n$) are the sizes of these \emph{runs}.
\end{itemize}

Similar analysis techniques have been applied to some other problems, though only partially. Following an example for the problem of computing the {\sc{Convex Hull}} in the plane (see Section~\ref{sec:comp-conv-hulls} for more): Levcopoulos et al.~\cite{2002-SWAT-AdaptiveAlgorithmsForConstructingConvexHullsAndTriangulationsOfPolygonalChains-LevcopoulosLingasMitchell} described an adaptive algorithm for computing the {\sc{Convex Hull}} of a polygonal chain. The algorithm takes advantage of the minimum number $\kappa$ of simple subchains into which the polygonal chain can be partitioned. They showed that the time complexity of the algorithm is within $O(n(1+\log{\kappa})) \subseteq O(n\log{n})$.

\begin{INUTILE}
The function ${\cal H}(n_1,\ldots,n_h)$ is the minimum entropy of the distribution of the points into a certificate of the instance.
\end{INUTILE}

\paragraph{Hypothesis.}
Which similar refinements can be applied to which divide-and-conquer algorithms, if any, and to what depth?

\paragraph{Our Results.}
In Section~\ref{sec:classification}, we list and classify previous refined analyses between those that are \emph{Structure Based} and those \emph{Input-Order Based}.
In Section~\ref{sec:polynomial}, we describe a refined analysis of the principal step in the algorithm for {\sc{Polynomial Multiplication}} using the \emph{Fast Fourier Transformation}.
In Sections~\ref{sec:comp-conv-hulls} and \ref{sec:monotone}, we describe two distinct refined analyses for problems in computational geometry, which yield various optimal input-order adaptive results on the computation of \textsc{Convex Hulls}, \textsc{Delaunay Triangulations}, and \textsc{Voronoi Diagrams} in the plane. 
\begin{LONG}
In Section~\ref{sec:comp-conv-hulls}, we refine the analysis of Levcopoulos et al.~\cite{2002-SWAT-AdaptiveAlgorithmsForConstructingConvexHullsAndTriangulationsOfPolygonalChains-LevcopoulosLingasMitchell}'s algorithm for computing the {\sc{Convex Hull}} of polygonal chains in time within $O(n(1 + {\mathcal{H}(n_1, \dots, n_\kappa)})) \subseteq O(n(1{+}\log{\kappa}))) \subseteq O(n\log{n})$, where $n_1, \dots, n_\kappa$ are the lengths of the subchains of a partition of a polygonal chain of $n$ points into the minimum number $\kappa$ of simple subchains. 
In Section~\ref{sec:monotone}, we describe a refined analysis of the computation of {\sc{Voronoi Diagrams}} and {\sc{Delaunay Triangulations}} for sequences $S$ formed by $n$ points, which yields a time complexity within $O(n(1+{\cal H}(v_1,\ldots,v_\mu))) \subseteq O(n(1{+}\log{\mu})) \subseteq O(n\log{n})$, where $v_1,\ldots,v_\mu$ are the sizes of the minimum number $\mu$ of monotone histograms in which $S$ can be cut, with respect to two fixed orthogonal directions\begin{VLONG} and show, as a corollary, an upper bound for computing the {\sc{Convex Hull}} of a sequence $S$ formed by $n$ points in time within $O(n(1+{\cal H}(v_1,\ldots,v_\mu))) \subseteq O(n(1{+}\log{\mu})) \subseteq O(n\log{n})$, where $v_1,\ldots,v_\mu$ are the sizes of the minimum number $\mu$ of monotone histograms in which $S$ can be cut, with respect to two fixed orthogonal directions\end{VLONG}.
\end{LONG}
In Section~\ref{sec:when}, we describe some more difficult applications of such refined analyses.
In Section~\ref{sec:discussion}, we discuss the possibility of designing algorithms which analyses combine \emph{Structure Based} and \emph{Input-Order Based} results synergically as opposed to running them in parallel.

\section{Classification of Results}
\label{sec:classification}

We review here some results on the refined analysis of algorithms for {\sc{Sorting}} arrays and for computing planar {\sc{Convex Hulls}}, {\sc{Delaunay Triangulations}} and {\sc{Voronoi Diagrams}}. We classify the various refined analysis between those focusing on the structure  of the instance (Section~\ref{sec:structure}) versus those focusing on the order in which the input is given (Section~\ref{sec:inputOrder}). There does not seem to be any example where both strategies are mixed: we discuss the potential for those in Section~\ref{sec:discussion}.

\subsection{Structure Based Results}
\label{sec:structure}

By ``Structure Based Results'' we mean algorithms taking advantage of the structure of the instance, for example, taking advantage of the frequencies of the values in a multiset or of the relative positions of the points in a set. Such results are known for {\sc{Sorting}} multisets and for computing {\sc{Convex Hulls}}.

\begin{LONG}
{\tt{MergeSort}} is a divide-and-conquer {\sc{Sorting}} algorithm in the comparison model. This algorithm relies on a linear time merge process, 
that combines two ordered sequences into a single ordered sequence.
\end{LONG}

Concerning the problem of {\sc{Sorting}}, Munro and Spira~\cite{1976-JComp-SortingAndSearchingInMultisets-MunroSpira} considered the task of {\sc{Sorting}} a multiset $S=\{x_1, \dots, x_n\}$ of $n$ real numbers with $\sigma$ distinct values, of multiplicities $m_1, \dots, m_\sigma$, respectively, so that $\sum_{i=1}^\sigma {m_i}=n$. They showed that adding counters to various classical algorithms (among which the divide-and-conquer based algorithm {\tt{MergeSort}}) yields a time complexity within $O(n(1+\mathcal{H}(m_1, \dots, m_\sigma))) \subseteq O(n(1{+}\log{\sigma})) \subseteq O(n\log{n})$ for {\sc{Sorting}} a multiset, where $\mathcal{H}(m_1, \dots, m_\sigma) = \sum_{i=1}^\sigma{\frac{m_i}{n}}\log{\frac{n}{m_i}}$ measures the entropy of the distribution of the multiplicities $\langle m_1, \dots, m_\sigma \rangle$. This result takes advantage of the frequencies of the values i.e., the structure of the instance.

\begin{LONG}
 Given a set $P$ of $n$ points, the \emph{Convex Hull} of $P$ is the smallest convex set containing $P$ (see Figure~\ref{fig:structures}).
\end{LONG}
Considering the problem of computing the {\sc{Convex Hull}}, Kirkpatrick and Seidel~\cite{1986-JCom-TheUltimatePlanarConvexHullAlgorithm-KirkpatrickSeidel} described an algorithm to compute the {\sc{Convex Hull}} of a set of $n$ planar points in time within $O(n(1+\log h))\subseteq O(n\log n)$, where $h$ is the number of vertices in the {\sc{Convex Hull}}. The algorithm relies on a variation of the divide-and-conquer paradigm, which they call the ``Marriage-Before-Conquest'' principle.
\begin{LONG}
For computing the upper hull, the algorithm finds a vertical line that divides the input point set into two approximately equal-size parts in liner time.  Next, it determines the edge of the upper hull that intersects this line in linear time.  It then eliminates the points that lie underneath this edge and finally applies the same procedure to the two sets of the remaining points on the left and right side of the vertical line.  A similar algorithm computes the lower hull.
\end{LONG}
Afshani et al.~\cite{2009-FOCS-InstanceOptimalGeometricAlgorithms-AfshaniBarbayChan} refined the complexity analysis of this algorithm to within $O(n(1+{\cal H}(n_1,\ldots,n_h)))\subseteq O(n(1{+}\log h)) \subseteq O(n\log{n})$, where $n_1, \dots, n_h$ are the sizes of a partition of the input, such that every element of the partition is a singleton or can be enclosed by a triangle whose interior is completely below the upper hull of the set, and ${\cal H}(n_1,\ldots,n_h)$ has the minimum possible value (minimum entropy of the distribution of the points into a certificate of the instance). This result takes advantage of the positions of the points i.e., the structure of the instance.

\subsection{Input-Order Based Results}
\label{sec:inputOrder}

By ``Input-Order Based Results'' we mean algorithms taking advantage of the order of the input, for example, taking advantage of the order of the values in a sequence of numbers or of the order in which the points are given in a polygonal chain. We describe only a sampling of such results on the problem of {\sc{Sorting}} permutations (see the survey from Moffat and Petersson~\cite{1992-ACJ-AnOverviewOfAdaptiveSorting-MoffatPetersson} for more), revisit some results on the computation of {\sc{Convex Hulls}} in the plane and 3D space, and show that those results in the plane are actually only ``Input-Order Based''. Those results for computing {\sc{Convex Hulls}} in 3D space show that no algorithm can take advantage of the position of the points i.e., structure based, in order to compute {\sc{Delaunay triangulations}} in the plane. 

Concerning the problem of {\sc{Sorting}}, Knuth~\cite{1973-BOOK-TheArtOfComputerProgrammingVol3-Knuth} described an adaptive sorting algorithm that takes advantage of permutations formed by sorted blocks called \emph{runs}, that is, subsequences of consecutive positions in the input with a positive gap between successive values, from beginning to end. He showed that the time complexity of this algorithm is within $O(n(1{+}\log\rho))\subseteq O(n\lg n)$, where $\rho$ is the number of \emph{runs} in the permutation\begin{SHORT}.\end{SHORT}
\begin{LONG}
(e.g. $(\emph{1,2,6,7,8,9},3,4,5)$ is composed of 2 such sorted blocks $(1,2,6,7,8,9)$ and $(3,4,5)$).
\end{LONG}
Barbay and Navarro~\cite{2013-TCS-OnCompressingPermutationsAndAdaptiveSorting-BarbayNavarro} refined the algorithm described by Knuth~\cite{1973-BOOK-TheArtOfComputerProgrammingVol3-Knuth}. They included in the analysis not only the number of \emph{runs} but also their sizes,
\begin{LONG}
 The main idea is to detect the runs first and then merge them pairwise, using a mergesort-like step.
The detection of ascending runs can be done in linear time by a scanning process identifying the positions $i$ in $\pi$ such that $\pi(i) > \pi(i+1)$.
Merging the two shortest runs at each step further reduces the number of comparisons, making the running time of the merging process adaptive to the entropy of the sequence of the lengths of the runs.
The merging process is then represented by a tree with the shape of a Huffman~\cite{1952-IRE-AMethodForTheInstructionOfMinimumRedundancyCodes-Huffman} tree, built from the distribution of the \emph{runs} sizes. They extend this result to mix ascending and descending runs,
\end{LONG}
showing that, if the permutation $\pi$ is formed by $\rho$ runs of sizes given by the vector $\langle r_1, \dots, r_{\rho} \rangle$, then $\pi$ can be sorted in time within $O(n(1+\mathcal{H}(r_1, \dots, r_{\rho}))) \subseteq O(n(1{+}\log{\rho})) \subseteq O(n\log{n})$. This result takes advantage of the order of the values in the input i.e., the input order.

Considering the computation of the {\sc{Convex Hull}} in the plane, Levcopoulos et al.~\cite{2002-SWAT-AdaptiveAlgorithmsForConstructingConvexHullsAndTriangulationsOfPolygonalChains-LevcopoulosLingasMitchell} described a divide-and-conquer algorithm for computing the {\sc{Convex Hull}} of a polygonal chain. The algorithm is based in the fact that the {\sc{Convex Hull}} of a simple chain can be computed in linear time, and that deciding whether a given chain is simple can be done in linear time.
\begin{INUTILE} simplicity of a chain can also be tested in linear time.\end{INUTILE}
They measured the complexity of this algorithm in terms of the minimum number of simple subchains $\kappa$ into which the chain can be cut.  They showed that the time complexity of this algorithm is within $O(n(1{+}\log{\kappa})) \subseteq O(n\log{n})$. We improve the analysis of this algorithm including not only the minimum number of simple subchain into which the polygonal chain can be partitioned but also their sizes (see Section~\ref{sec:comp-conv-hulls}).
This result takes advantage of the order in which the points are given i.e., the input order.

\begin{LONG}
\begin{figure}
\centering
\includegraphics[scale=1.2]{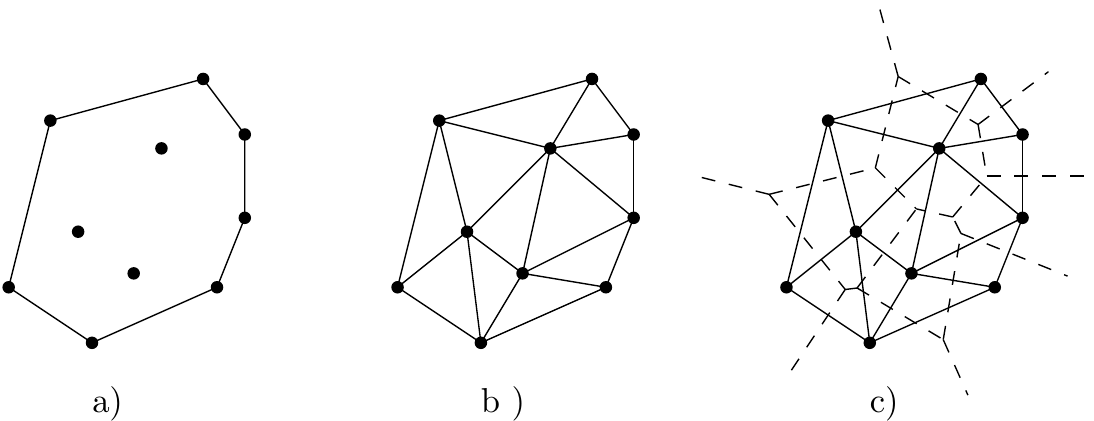}
\caption{A point set $P$. a) The convex hull of $P$, b) the Delaunay triangulation of $P$, and c) the Delaunay triangulation and the Voronoi diagram of $P$.}
\label{fig:structures}
\end{figure}
\end{LONG}

\begin{LONG}
Given a set $P$ of $n$ planar points, a \emph{triangulation} of $P$ is a subdivision of the convex hull of $P$ into triangles with vertex set the set $P$.
\end{LONG}
Concerning the computation of the {\sc{Convex Hull}} in 3D, a related concept is that of the {\sc{Delaunay Triangulation}} $DT(P)$ of a point set $P$ in the plane, a triangulation where \begin{LONG} for every edge $e$ there exists a disk $C$ with the following properties: (i) the endpoints of edge $e$ are on the boundary of $C$, and (ii) no other point of $P$ is in the interior of $C$: it is named after Boris Delaunay for his work on this topic in 1934. An equivalent definition is such that\end{LONG} no point in $P$ is inside the circumcircle of any triangle of $DT(P)$.  
Computing the \textsc{Delaunay Triangulation} is equivalent to computing its dual, called the \textsc{Voronoi Diagram}\begin{LONG}: each one can be constructed from the other in linear time~\cite{1985-BOOK-ComputationalGeometryAnIntroduction-PreparataShamos} (see Figure~\ref{fig:structures})\end{LONG}.
Computing the \textsc{Delaunay Triangulation} of a set of points in two dimensions reduces to computing the \textsc{Convex Hull} in three dimensions of the projections of those points on an hyperbolic plane.
\begin{LONG} The projection of $P$ onto the unit elliptic paraboloid $z=x^2+y^2$ yields a point set $P'$. The \textsc{Convex Hull} $CH(P')$ of $P'$ contains every point of the set. The downward-facing facet of $CH(P')$ are those whose normal vectors have a negative $z$-value. Projecting the edges of downward-facing facet in $CH(P')$ onto the plane yields the {\sc{Delaunay triangulation}} of $P$.
\end{LONG}
By showing tight bounds for input-order oblivious i.e., structure based, algorithms for computing \textsc{Convex Hulls} in three dimensions, Afshani et al.~\cite{2009-FOCS-InstanceOptimalGeometricAlgorithms-AfshaniBarbayChan} indirectly proved that no planar \textsc{Delaunay Triangulation} algorithm can take advantage of the position of the points.
\begin{LONG}
\begin{theorem}[Afshani et al.~\cite{2009-FOCS-InstanceOptimalGeometricAlgorithms-AfshaniBarbayChan}] \label{th:lowerBound} Consider a set of $n$ points in the plane.  For any algorithm $A$ computing the Delaunay triangulation in the algebraic decision tree model, $A$ performs in time within $\Omega(n\log n)$ on average on a random order of the points. This implies that there is an order of those points for which $A$ performs in time within $\Omega(n\log n)$.
\end{theorem}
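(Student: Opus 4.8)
The plan is to prove the quantitative statement from which the theorem follows for \emph{every} input set: for an arbitrary set $P=\{p_1,\dots,p_n\}$ of $n$ distinct points, any correct Delaunay-triangulation algorithm $A$ in the algebraic decision tree model reaches a leaf of expected depth within $\Omega(n\log n)$ when the points of $P$ are fed to it in a uniformly random order; since the running time on an input is at least the number of sign tests performed, i.e.\ the depth of the reached leaf, $A$'s expected running time on a random order of $P$ is then within $\Omega(n\log n)$, and the existence of an order attaining $\Omega(n\log n)$ follows because the maximum over the $n!$ orders is at least the average. I would first carry this out assuming $P$ is in general position (no three collinear, no four cocircular) and then remove that assumption.

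First I would fix $G=\mathrm{DT}(P)$, the unique Delaunay triangulation of $P$, regarded as a graph on labels $\{1,\dots,n\}$ with label $i$ on $p_i$. For $\sigma\in S_n$ let $\Phi(\sigma)=(p_{\sigma(1)},\dots,p_{\sigma(n)})$ be the corresponding ordered input; correctness forces $A$ on $\Phi(\sigma)$ to output $\sigma^{-1}(G)$ (the graph $G$ with label $i$ renamed $\sigma^{-1}(i)$). Since a leaf of the decision tree carries a single output, if $\Phi(\sigma)$ and $\Phi(\sigma')$ reach the same leaf then $\sigma^{-1}(G)=(\sigma')^{-1}(G)$, that is $\sigma(\sigma')^{-1}\in\mathrm{Aut}(G)$; hence the set of permutations whose inputs reach a fixed leaf is contained in a single right coset of $\mathrm{Aut}(G)$ and therefore has at most $|\mathrm{Aut}(G)|$ elements.

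Next I would bound $|\mathrm{Aut}(G)|$ and invoke Shannon's inequality. An automorphism of a triangulation of a topological disk is determined by the image of any one triangle together with the induced bijection on its three vertices --- propagate the map across shared edges using connectivity of the dual --- so $|\mathrm{Aut}(G)|\le 6\,|\mathcal{T}|=O(n)$ with $\mathcal{T}$ the set of triangles of $G$. Consequently, for a uniformly random $\sigma$ the leaf $L$ reached on $\Phi(\sigma)$ has $\Pr[L=v]\le |\mathrm{Aut}(G)|/n!\le cn/n!$ for every leaf $v$, so its binary entropy satisfies $H(L)\ge\log_2\big(n!/(cn)\big)=\Omega(n\log n)$. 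Each node of an algebraic decision tree has a constant number $b$ of children (three, for a sign test), so the leaves form a prefix code over a $b$-letter alphabet, and by Kraft's inequality together with Gibbs' inequality the expected depth of $L$ is at least $H(L)/\log_2 b=\Omega(n\log n)$, which is what was claimed.

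Finally, to drop the general-position hypothesis, observe that a degenerate $P$ only allows $A$ to output one of a set $\mathcal{D}$ of legitimate Delaunay triangulations; since $|\mathcal{D}|\le 2^{O(n)}$ (it is at most the number of triangulations of $P$) and each of them still has $O(n)$ automorphisms, each leaf stays consistent with at most $2^{O(n)}$ of the $n!$ orders, whence $H(L)\ge\log_2(n!)-O(n)=\Omega(n\log n)$ as before (alternatively, perturb $P$ to general position). The step I expect to be the genuine obstacle is precisely this counting of distinct forced outputs: the remainder is the standard information-theoretic skeleton showing that sorting-type tasks need $\Omega(n\log n)$, but to make the bound hold for \emph{every} point set rather than for a single cleverly chosen one, I must be certain that distinct input orders force $n!/n^{O(1)}$ (or at worst $n!/2^{O(n)}$) distinct labelled outputs, and that is exactly what the automorphism bound on $\mathrm{DT}(P)$ secures.
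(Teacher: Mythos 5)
The paper does not prove this statement: it is quoted verbatim from Afshani, Barbay and Chan and used as a black box, so there is no in-paper argument to compare yours against. On its own merits, your proposal follows what is essentially the standard (and, to the best of my knowledge, the cited authors') route: a permutation-counting/entropy argument showing that the $n!$ input orders of a fixed point set force almost that many distinct labelled outputs, so the decision tree needs $\Omega(n\log n)$ expected depth by Kraft--Gibbs. The skeleton (leaf $\Rightarrow$ fixed discrete output, orders per leaf bounded by the stabilizer of $\mathrm{DT}(P)$, min-entropy $\ge \log(n!/|\mathrm{Aut}|)$, expected depth $\ge H(L)/\log_2 3$) is sound, and you correctly identify the crux as the bound on the number of orders consistent with one leaf.

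That crux is the one place you should tighten. Your propagation argument (image of one triangle plus its vertex bijection determines everything, via connectivity of the dual) bounds the group of permutations that preserve the \emph{face structure} of the triangulation; it does not directly bound $\mathrm{Aut}(G)$ for $G$ ``regarded as a graph,'' because a graph automorphism of a triangulated disk need not send facial $3$-cycles to facial $3$-cycles ($G$ is only $2$-connected in general, so Whitney's rigidity is not available, and an edge of a planar graph can lie on many non-facial $3$-cycles, e.g.\ in a book $K_2+\overline{K_k}$). The clean fix is to take the algorithm's output to be the set of Delaunay \emph{triangles} (the natural convention for ``computing the Delaunay triangulation''), so that two orders reaching the same leaf induce a permutation preserving the face set, and then your $6|\mathcal{T}|=O(n)$ propagation bound is valid as stated. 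If you insist on edge-set output, you must separately argue that $|\mathrm{Aut}(G)|\le 2^{o(n\log n)}$ for a triangulated disk, which is true but needs an SPQR-type analysis (note $K_{2,m}$ shows that $2$-connected planarity alone permits automorphism groups of size $2^{\Theta(n\log n)}$, so the triangulation structure is genuinely needed). With that adjustment, and modulo the routine caveats you already note for degenerate point sets, the argument is correct.
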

\end{LONG}
We describe in Sections~\ref{sec:refin-analys-two} and \ref{sec:when} some algorithms taking advantage of the order of the input to compute \textsc{Delaunay Triangulations} and \textsc{Voronoi Diagrams}, among other desirable objects in computational geometry.        

\section{Refined Analysis: Three Examples}\label{sec:refin-analys-two}

We show in this section how to refine the analysis of the principal step in the algorithm for multiplying polynomials using the \emph{Fast Fourier Transformation}~\cite{1988-BOOK-NumericalRecipesInC-PressFlanneryTeukolskyVetterling}, how to refine the analysis of the algorithm from Levcopoulos et al.~\cite{2002-SWAT-AdaptiveAlgorithmsForConstructingConvexHullsAndTriangulationsOfPolygonalChains-LevcopoulosLingasMitchell} for the decomposition of a polygonal chain into simple sequences, and how to extend this analysis to the computation of {\sc{Voronoi diagrams}} and {\sc{Delaunay triangulations}} for another measure of difficulty based on monotone histograms.

\subsection{Polynomial Multiplication: Adaptivity to Zero-coefficients}
\label{sec:polynomial}

Given two polynomials $A = (a_0, \dots, a_{n-1})$ and $B = (b_0, \dots, b_{n-1})$ described by their coefficients, the polynomial multiplication problem is to compute the coefficients of the polynomial $C=A\cdot B$. The approach to multiplying polynomials using the \emph{Fast Fourier Transformation}~\cite{1988-BOOK-NumericalRecipesInC-PressFlanneryTeukolskyVetterling} can be divided into three steps: (i) evaluate the polynomials $A$ and $B$ in $2n$ values (the $(2n)$th roots of the unity); (ii) evaluate $C$ in these $2n$ values by multiplying the evaluations of $A$ and $B$; and (iii) obtain the coefficients of $C$ by interpolation using the values computed in the step (ii). The steps (i) and (iii) are accomplished by a divide-and-conquer algorithm for polynomial evaluation. We refine the analysis of the divide-and-conquer polynomial evaluation algorithm to take advantage of the number of zero-coefficients and of their relative positions in the vector of coefficients that describes the polynomial.

Given a polynomial $A$, the polynomial evaluation algorithm defines two polynomials, $A_{even}$ and $A_{odd}$, that consist of the even-indexed and odd-indexed coefficients of $A$, respectively. Hence, $A(x) = A_{even}(x^2) + x \times A_{odd}(x^2)$. If $x$ is one of the $(2n)$th roots of the unity, then $x^2$ is one of the $n$th roots of the unity. In order to evaluate the polynomial $A$ on each of the $(2n)$th roots of the unity, the recursive procedure divides $A$ into $A_{even}$ and $A_{odd}$, evaluates $A_{even}$ and $A_{odd}$ in the $n$th roots of the unity, and once these values are computed, evaluate $A$ on each of the $(2n)$th roots of the unity using the formula $A(x) = A_{even}(x^2) + x \times A_{odd}(x^2)$. The time complexity $T(n)$ of this algorithm follows the recurrence $T(n) \le T(\frac{n}{2}) + O(n)$, which yields a time complexity within $O(n\log{n})$. If at one step of the recursion call all the coefficients are zero, the algorithm finishes the computation at this branch (see Figure~\ref{fig:coefficients}).

\begin{figure}
\centering
\includegraphics[scale=0.9]{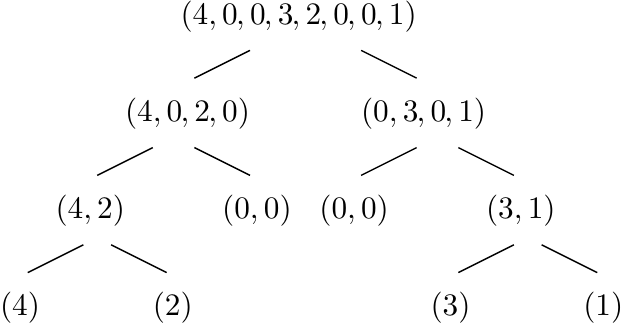}
\caption{Vector formed by the coefficients of the polynomial $A(x) = 4 + 3x^3 + 2x^4 + x^7$ and the recursion tree of the divide-and-conquer evaluation algorithm. At each step, the algorithm divides the coefficients of $A$ into $A_{even}$ and $A_{odd}$.}
\label{fig:coefficients}
\end{figure}

Given a polynomial $A=(a_0, \dots, a_{n-1})$, we define the equivalence relation $E$ between the positions of the zero-coefficients in $A$: two positions of zeros $p$ and $q$ are equivalent if and only if (i) there exists $k \in \mathbb{Z^+}$ such that $p \equiv q\ (\textrm{mod}\ 2^k)$; (ii) all the positions $r$ such that $r \equiv p\ (\textrm{mod}\ 2^k)$ are zeros; and (iii) there exist a position $t$ such that $t \equiv p \ (\textrm{mod}\ 2^{k-1})$ and the value in $t$ is different from zero. The idea of this equivalence relation is to group the positions in classes such that the positions in the same class form a vector where all the coefficients are zero in a node of the recursion tree. Let $\zeta$ and $\eta$ be the number of zeros in the vector formed by the coefficients of $A$ and the number of equivalence classes defined by $E$, respectively. Let $\langle n_1, \dots, n_{\eta} \rangle$ be the vector formed by the sizes of the equivalence classes defined by $E$. Then, $\sum^{\eta}_{i=1} n_i = \zeta$. The following theorem sets the refined analysis in function of $\zeta$ and the vector $\langle n_1, \dots, n_{\eta} \rangle$.

\begin{theorem}
The complexity of the evaluation algorithm for polynomials of $n$ coefficients, number of zero-coefficients $\zeta$ and vector $\langle n_1, \dots, n_{\eta}\rangle$ formed by the sizes of the equivalence classes defined by $E$ is within $O((n-\zeta)\log{n} + \sum^{\eta}_{i=1} n_i\log{\frac{n}{n_i}}) \subseteq O(n\log{n})$.  
\end{theorem}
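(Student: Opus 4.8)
The plan is to express the running time as a weighted sum over the nodes of the recursion tree, to identify the equivalence classes of $E$ with the subtrees that the algorithm prunes, and then to evaluate the resulting sum by subtracting the pruned contributions from the full-tree bound.

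First I would fix notation. Since $n$ is a power of two, the recursion tree $\mathcal{T}$ is the complete binary tree with $n$ leaves: the node reached by a fixed sequence of $d$ even/odd choices sits at depth $d$, handles a coefficient vector of length $n/2^d$, and corresponds to the residue class $\{j: j\equiv r\ (\mathrm{mod}\ 2^d)\}$ for the appropriate $r$. Call such a node \emph{alive} if its residue class contains a nonzero coefficient of $A$, and \emph{dead} otherwise. At a node of depth $d$ the algorithm does $O(n/2^d)$ work — the split, the two recursive calls, and the recombination via $A(x)=A_{even}(x^2)+x\times A_{odd}(x^2)$, or else the all-zero short-circuit — and it recurses into a node only when the parent is alive. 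Since every ancestor of an alive node is alive, the nodes actually touched are exactly the alive nodes together with the dead children of alive nodes, the latter being the roots of the \emph{maximal} dead subtrees $S_1,\dots,S_\eta$. Hence the total time is $O\!\bigl(\sum_{v\ \mathrm{alive}}\mathrm{size}(v)\bigr)+O(\zeta)$, the $O(\zeta)$ covering those maximal dead subtrees (whose lengths sum to $\sum_i n_i=\zeta$), and this is $\Theta\!\bigl(\sum_{v\ \mathrm{alive}}\mathrm{size}(v)\bigr)$ when $A\not\equiv 0$, since the root alone contributes $n\ge\zeta$.

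Next I would prove that the equivalence classes of $E$ are exactly the residue classes of the maximal dead subtrees, the class carried by a subtree rooted at depth $k$ having size $n/2^k$. This is a direct reading of the definition: condition~(ii) says the depth-$k$ node through $p$ is dead, condition~(iii) says its parent is alive, and condition~(i) with witness $k$ says $p$ and $q$ lie in that common node; the witness is unique because every descendant of a dead node is dead, so each zero position $p$ belongs to a single maximal dead subtree. Consequently the alive nodes are precisely $\mathcal{T}$ with $S_1,\dots,S_\eta$ deleted, where $S_i$ is a complete binary tree with $n_i$ leaves rooted at depth $\log(n/n_i)$; a one-line count (depths measured in $\mathcal{T}$) gives $\sum_{v\in S_i}2^{-\mathrm{depth}(v)}=\frac{n_i}{n}(\log n_i+1)$, while $\sum_{v\in\mathcal{T}}2^{-\mathrm{depth}(v)}=\log n+1$.

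Finally I would combine these: $\sum_{v\ \mathrm{alive}}\mathrm{size}(v)=n\sum_{v\ \mathrm{alive}}2^{-\mathrm{depth}(v)}=n(\log n+1)-\sum_{i}n_i(\log n_i+1)$, which, using $\sum_i n_i=\zeta$ and $n\log n=(n-\zeta)\log n+\sum_i n_i\log n$, rearranges to $(n-\zeta)+(n-\zeta)\log n+\sum_{i=1}^{\eta}n_i\log\frac{n}{n_i}$. The leftover $O(n-\zeta)$ and the $O(\zeta)$ from the dead subtrees are absorbed because each $n_i\le n/2$ forces $\log\frac{n}{n_i}\ge1$, so $n-\zeta\le(n-\zeta)\log n$ and $\zeta=\sum_i n_i\le\sum_i n_i\log\frac{n}{n_i}$; this yields the stated bound $O\!\bigl((n-\zeta)\log n+\sum_i n_i\log\frac{n}{n_i}\bigr)$, and the containment in $O(n\log n)$ follows from $\sum_i n_i\log\frac{n}{n_i}\le(\sum_i n_i)\log n\le n\log n$. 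I expect the only real subtlety to lie in establishing the bijection between the equivalence classes of $E$ and the maximal dead subtrees — checking that the minimality baked into condition~(iii) is exactly what prevents an overcount — together with disposing of the degenerate case $A\equiv0$, where the bound should carry an extra $+n$ to account for reading the input.
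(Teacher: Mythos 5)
Your proof is correct, and although the paper states this theorem without any accompanying proof, your approach---charging each alive node in the recursion tree $O(\text{size})$ work, identifying the equivalence classes of $E$ with the maximal all-zero subtrees the algorithm prunes, and subtracting their weighted sizes $\sum_i n_i(\log n_i + 1)$ from the full-tree total $n(\log n + 1)$---is exactly the subtraction argument the paper applies to the analogous refined analysis of the {\tt{Test-And-Divide}} convex-hull algorithm in Section~\ref{sec:finer}. The bijection between $E$-classes and roots of maximal dead subtrees (conditions (ii) and (iii) pin down the unique witness depth $k$, condition (i) then identifies $q$ as sharing the residue class mod $2^k$) is the one nontrivial structural step and you handle it correctly, as is the algebra that rearranges $n(\log n{+}1) - \sum_i n_i(\log n_i{+}1)$ into $(n-\zeta)\log n + \sum_i n_i\log\frac{n}{n_i}$ up to absorbable lower-order terms. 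Your closing observation that $A\equiv 0$ makes $\eta = 0$ and the bound vacuous is a genuine (if minor) omission in the theorem's statement, not a gap in your argument.
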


In the following sections, we apply similar techniques to obtain optimal refined analysis for input-order adaptive algorithms computing {\sc{Convex Hulls}} and {\sc{Delaunay triangulations}} in the plane.

\subsection{Computing Convex Hulls: Adaptivity to Simple Subchains}
\label{sec:comp-conv-hulls}

A \emph{polygonal chain} is a curve specified by a sequence of points $p_1, p_2, \dots, p_n$. The curve itself consists of the line segments connecting the pairs of consecutive points. A polygonal chain $C$ is \emph{simple} if any two edges of $C$ that are not adjacent are disjoint, or if the intersection point is a vertex of $C$; and any two adjacent edges share only their common vertex. Melkman~\cite{1987-IPL-OnLineConstructionOfTheConvexHullOfASimplePolyline-Melkman} described an algorithm that computes the {\sc{Convex Hull}} of a simple polygonal chain in linear time, and Chazelle~\cite{1991-DCG-TriangulatingASimplePolygonInLinearTime-Chazelle} described an algorithm for testing whether a polygonal chain is simple in linear time.

Levcopoulos et al.~\cite{2002-SWAT-AdaptiveAlgorithmsForConstructingConvexHullsAndTriangulationsOfPolygonalChains-LevcopoulosLingasMitchell} combined these results to yield an adaptive divide-and-conquer algorithm for computing the {\sc{Convex Hull}} of polygonal chains. The algorithm tests if the chain $C$ is simple, using Chazelle~\cite{1991-DCG-TriangulatingASimplePolygonInLinearTime-Chazelle}'s algorithm: if the chain $C$ is simple, the algorithm computes the {\sc{Convex Hull}} of $C$ in linear time, using Melkman~\cite{1987-IPL-OnLineConstructionOfTheConvexHullOfASimplePolyline-Melkman}'s algorithm. Otherwise, if $C$ is not simple, the algorithm cuts $C$ into the subsequences $C'$ and $C''$, whose sizes differ at most in one; recurses on each of them; and merges the resulting {\sc{Convex Hulls}} using Preparata and Shamos's algorithm~\cite{1985-BOOK-ComputationalGeometryAnIntroduction-PreparataShamos}. They measured the complexity of this algorithm in terms of the minimum number of simple subchains $\kappa$ into which the chain $C$ can be cut. Let $t(n, \kappa)$ be the worst-case time complexity taken by this algorithm for an input chain of $n$ vertices that can be cut into $\kappa$ simple subchains.
They showed that $t(n, \kappa)$ satisfies the following recursion relation: $t(n, \kappa) \leq t(\lceil \frac{n}{2} \rceil, \kappa_1) + t(\lfloor \frac{n}{2} \rfloor, \kappa_2), \kappa_1 + \kappa_2 \leq \kappa + 1$. The solution to this recursion gives $t(n, \kappa) \in O(n(1{+}\log{\kappa}))\subseteq O(n\log n)$. In the sequel, this algorithm will be named as {\tt{Test-And-Divide}}.

\begin{LONG}
The {\tt{Test-And-Divide}} algorithm partitions the input chain into simple subchains. If it was possible to partition the input chain into the minimum number $\kappa$ of simple subchains in linear time, then the same approach described by Barbay and Navarro~\cite{2013-TCS-OnCompressingPermutationsAndAdaptiveSorting-BarbayNavarro} could be applied to obtain a refined analysis in function of $O(n(1+\mathcal{H}(n_1, \dots, n_\kappa))) \subseteq O(n(1{+}\log{\kappa})) \subseteq O(n\log{n})$. But, as far as we know, there does not exist any linear time algorithm to accomplish this task.
\end{LONG}

In the recursion tree of the execution of the {\tt{Test-And-Divide}} algorithm on input $C$ of $n$ points, every node represents a 
subchain of $C$. The cost of every node is linear in the size of the subchain that it represents. The simplicity 
test and the merge process are both linear in the number of points in the subchain.\begin{LONG} When this subchain is simple 
the node that represents this subchain is a leaf.\end{LONG} Every time this algorithm discovers that the polygonal chain is simple, 
\begin{LONG}it executes a number of operations linear in the size of the chain and\end{LONG} the corresponding node in the recursion tree becomes a leaf.

\subsubsection{Width Analysis: A Warm-up.} 

Consider for illustration the particular case of a polygonal chain $C$ of $n = 2^m$ planar points such that $C$ can be partitioned into the minimum number of
simple subchains $\kappa = m$ of lengths $2^1, 2^1, 2^2, 2^3 \dots, 2^{m-1}$, respectively. Every time the algorithm cuts the current chain in half, the right subchain is simple and the recursive call is made only in the left subchain. Hence, the recursion tree of the algorithm on input $C$ has only two nodes per level, one of which is a leaf (see Figure~\ref{fig:rec}). The overall running time of the algorithm on $C$ is then within $O(n)$.
\begin{figure}
\centering
\includegraphics[scale=0.9]{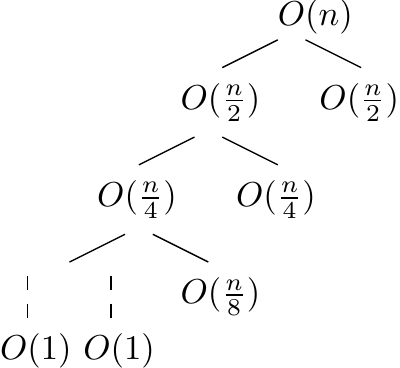}
\caption{The recursion tree of $A$ on $C$. Each node represents a recursive call. Noted in each node is the asymptotic complexities of the simplicity test and the merging process on the subchain that it represents.}
\label{fig:rec}
\end{figure}
%

\begin{definition}[Width]
The {\em{width $\omega$}} of the recursion tree in the execution of the {\tt{Test-And-Divide}} algorithm on input $C$ is the maximum number of nodes at any level.
\end{definition}

Levcopoulos et al.~\cite{2002-SWAT-AdaptiveAlgorithmsForConstructingConvexHullsAndTriangulationsOfPolygonalChains-LevcopoulosLingasMitchell} analyzed the complexity of this algorithm
in the worst case over instances of fixed size $n$ and $\kappa$. The following lemma gives an alternate analysis in the worst case over instances of fixed size $n$
and width $\omega$.

\begin{lemma}
Let $\omega$ be the width of the recursion tree in the execution of the {\tt{Test-And-Divide}} algorithm on input $C$ of $n$ planar 
points. The complexity of this algorithm on input $C$ is within $O(\omega n)$.
\end{lemma}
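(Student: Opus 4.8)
The plan is to bound the total work by summing, level by level, the cost of all nodes in the recursion tree. First I would observe that at any fixed level $\ell$ of the recursion tree, the subchains represented by the nodes at that level are obtained by repeatedly halving $C$, so their sizes are each at most $\lceil n/2^\ell \rceil$ and — crucially — the subchains at a given level are (almost) disjoint pieces of the original chain of $n$ points, hence their sizes sum to at most $n + (\text{number of nodes at level }\ell)$, the additive term accounting for the single shared vertex created by each cut. Since the cost incurred at each node is linear in the size of the subchain it represents (the simplicity test via Chazelle's algorithm and the merge via Preparata--Shamos are both linear), the total cost of level $\ell$ is within $O(n + \text{(number of nodes at level }\ell))$, which by the definition of width is within $O(n + \omega) = O(n)$ whenever $\omega \le n$, and in any case within $O(n)$ since there are at most $n$ leaves and hence $\omega \le n$.

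Next I would bound the number of levels. A node at level $\ell$ that is not a leaf represents a subchain of size at least $2$ that gets cut in half, so after at most $\lceil \log_2 n \rceil$ levels every node represents a subchain of size $1$ and is a leaf; thus the tree has depth within $O(\log n)$. However, summing $O(n)$ over $O(\log n)$ levels would only give $O(n\log n)$, so the refinement is to observe that a level actually contributes $O(n)$ work only when it contains $\Theta(\omega)$ nodes, and more precisely the contribution of level $\ell$ is within $O(S_\ell)$ where $S_\ell$ is the total size of the subchains at level $\ell$. The key combinatorial fact is then that $\sum_{\ell} S_\ell = O(\omega n)$: at each level the total size is at most $n$ plus a correction, but more usefully, each of the $n$ original points appears in the subchains along a single root-to-leaf path, and that path has length at most the depth; yet we must instead charge against $\omega$. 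I would therefore argue directly: the number of nodes in the whole tree is within $O(\omega \log n)$ is \emph{not} what we want either — rather, since every level has at most $\omega$ nodes and each node at level $\ell$ has size $\le \lceil n/2^\ell\rceil$, the total work is within $\sum_{\ell \ge 0} \omega \cdot O(n/2^\ell + 1) = O(\omega n) + O(\omega \log n) = O(\omega n)$, using that the number of levels is $O(\log n)$ and $\omega \log n \le \omega n$.

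The cleanest route, which I would ultimately take, is the second one: fix a level $\ell$; it has at most $\omega$ nodes by definition of width; each such node represents a subchain obtained from $C$ by $\ell$ successive halvings, hence of size at most $\lceil n/2^\ell \rceil + 1 \le n/2^\ell + 2$; the cost at each such node is linear in that size; so the cost of level $\ell$ is within $O\!\bigl(\omega\,(n/2^\ell + 2)\bigr)$. Summing over all $O(\log n)$ levels, $\sum_{\ell=0}^{O(\log n)} O\!\bigl(\omega\,n/2^\ell + \omega\bigr) \subseteq O(\omega n) + O(\omega \log n) \subseteq O(\omega n)$, since $\log n \le n$. This gives the claimed bound $O(\omega n)$.

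The main obstacle, and the point that needs care, is justifying that the relevant quantity is genuinely $n/2^\ell$ per node rather than something larger: one has to rule out the possibility that the algorithm's cuts (whose sizes differ by at most one at each step) cause sizes not to shrink geometrically down every branch, and to confirm that the ``size $\le \lceil n/2^\ell\rceil + O(1)$ at level $\ell$'' bound holds uniformly. This follows because the algorithm always splits the current subchain into two halves of sizes differing by at most one, independently of simplicity, so an easy induction gives that every node at depth $\ell$ has size within $n/2^\ell + O(1)$; the $O(1)$ slack from repeated ceilings and from shared cut vertices contributes only an additional $O(\omega \log n) \subseteq O(\omega n)$ to the total and is absorbed. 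A secondary subtlety is confirming that $\omega \le n$ so that the crude bound $O(\omega n)$ is never worse than the $O(n \log n)$ worst case; this holds since the tree has at most $n$ leaves and every internal node has two children, so no level has more than $n$ nodes.
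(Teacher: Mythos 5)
Your proof is correct and complete. The paper states this lemma as a warm-up observation and does not supply a proof of its own, so there is nothing to compare against directly; the route you ultimately settle on --- at level $\ell$ there are at most $\omega$ nodes by definition of width, each representing a subchain of size at most $\lceil n/2^\ell\rceil + O(1)$ because the algorithm halves unconditionally (whether or not the current subchain is simple), the per-node cost (simplicity test and merge) is linear, so level $\ell$ costs $O\bigl(\omega(n/2^\ell+1)\bigr)$ and summing the geometric series over the $O(\log n)$ levels gives $O(\omega n) + O(\omega\log n) \subseteq O(\omega n)$ --- is the natural and intended one. Your care in noting that the geometric decay of sizes holds down every branch regardless of where the algorithm stops, and that the $O(\omega\log n)$ slack from rounding and shared cut vertices is absorbed since $\log n \le n$, closes exactly the two places where a looser write-up would leave a gap. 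As a side remark, truncating the per-level cost at $\min\{\,n + O(\omega),\; \omega(n/2^\ell + O(1))\,\}$ already yields the tighter $O(n(1+\log\omega))$, which is consistent with the paper's subsequent entropy-based refinement $O(n(1+\mathcal{H}(\ell_1,\dots,\ell_m)))$ in terms of the leaf sizes.
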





\subsubsection{Refined Analysis.}
\label{sec:finer}

\begin{LONG}
Let $\langle \ell_1, \dots, \ell_m \rangle$ be the vector formed by the sizes of the 
subchains represented by the $m$ leaves of the recursion tree of the {\tt{Test-And-Divide}} algorithm on input $C$ (such that $\sum_{i=1}^m{\ell_i} = n$).
The number of operations ``saved'' by the
algorithm every time it discovers a leaf of size $\ell_i$ is within $\Omega(\ell_i\log{\ell_i})$ (the cost of the subtree of the perfect binary 
tree rooted in a node of size $\ell_i$) minus $O(\ell_i)$ (the operations in the leaf are not saved). The time complexity $T(C)$ of 
this algorithm on input $C$ is within $O(n\log{n})$ (the cost of the perfect binary tree) minus $\Omega(\sum_{i=1}^m \ell_i \log{\ell_i} - \ell_i)$ (the number of 
operations saved by the algorithm). So, $T(C) \subseteq O(n\log{n} - \sum_{i=1}^m (\ell_i \log{\ell_i} - \ell_i)) = O(n(1 + {\mathcal{H}(\ell_1, \dots, \ell_m)})) \subseteq O(\omega n) \cap O(n\log{m}) \subseteq O(n\log{n})$.

The following lemma summarizes this finer analysis of the {\tt{Test-And-Divide}} algorithm in function of the width and the number of leaves in the recursion tree.

\begin{lemma}
Let $\langle \ell_1, \dots, \ell_m \rangle$ be the vector formed by the sizes of the subchains represented by the $m$ leaves of the recursion 
tree in the execution of the algorithm $A$ on input $C$ of $n = \sum_1^m{\ell_i}$ planar points. Let $\omega$ be the maximum width 
of the recursion tree. The time complexity of $A$ on $C$ is within $O(n(1 + {\mathcal{H}(\ell_1, \dots, \ell_m)})) \subseteq O(\omega n) \cap O(n\log{m}) \subseteq O(n\log{n})$.
\end{lemma}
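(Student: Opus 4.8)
The plan is to replace the informal ``operations saved versus a perfect binary tree'' picture sketched in the paragraph above (whose constants do not obviously match up) by a level-by-level charging argument on the recursion tree $\mathcal{T}$ of {\tt{Test-And-Divide}} on $C$. First I would fix notation: for a node $v$ at depth $j$, both the simplicity test (Chazelle) and the merge of the two child hulls (Preparata--Shamos) cost $O(|v|)$, and a leaf of size $\ell$ costs $O(\ell)$ (simplicity test plus Melkman's linear hull). Hence $T(C) = O\!\left(\sum_{v\in\mathcal{T}}|v|\right) = O\!\left(\sum_{j\ge 0} |S_j|\right)$, where $S_j\subseteq C$ is the set of points lying in the depth-$j$ nodes of $\mathcal{T}$, i.e.\ the points still ``alive'' after $j$ levels, those belonging to subchains not yet declared simple at a shallower depth. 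Writing $d_i$ for the depth of the $i$-th leaf (size $\ell_i$), the depth-$j$ nodes partition exactly the points whose leaf has depth $\ge j$, so $|S_j| = n - \sum_{i:\,d_i<j}\ell_i$, and the outer sum runs over $0\le j\le D$ with $D=\max_i d_i$.

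Next I would evaluate $\sum_{j=0}^{D}|S_j|$ by exchanging the order of summation: $\sum_{j=0}^{D}\bigl(n - \sum_{i:d_i<j}\ell_i\bigr) = (D{+}1)n - \sum_i \ell_i(D-d_i) = n + \sum_i \ell_i d_i$, using $\sum_i\ell_i = n$. It then remains to relate $d_i$ to $\ell_i$. Since every split is balanced, a node at depth $j$ has size between $\lfloor n/2^{j}\rfloor$ and $\lceil n/2^{j}\rceil$; applied to leaf $i$ this gives $d_i \le \log(n/\ell_i)+O(1)$ (and $d_i\ge \log(n/\ell_i)-O(1)$). Therefore $\sum_i \ell_i d_i \le \sum_i \ell_i\bigl(\log(n/\ell_i)+O(1)\bigr) = n\,\mathcal{H}(\ell_1,\dots,\ell_m)+O(n)$, which yields $T(C) = O\bigl(n(1+\mathcal{H}(\ell_1,\dots,\ell_m))\bigr)$.

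Finally I would prove the three inclusions, which are now purely statements about the function $n(1+\mathcal{H}(\ell_1,\dots,\ell_m))$. For the inclusion in $O(\omega n)$: by the two-sided estimate above, $n(1+\mathcal{H}(\ell_1,\dots,\ell_m)) = O\bigl(n + \sum_i \ell_i d_i\bigr) = O\bigl(\sum_{j}|S_j|\bigr)$, and since each level has at most $\omega$ nodes of size at most $\lceil n/2^{j}\rceil$, $\sum_j |S_j| \le \omega\sum_{j\ge 0}\lceil n/2^{j}\rceil = O(\omega n)$ (one may also simply invoke the earlier lemma bounding the running time by $O(\omega n)$). For the inclusion in $O(n\log m)$: the entropy of a distribution over $m$ outcomes is at most $\log m$, so $\mathcal{H}(\ell_1,\dots,\ell_m)\le\log m$; the degenerate case $m=1$ (where $\mathcal{H}=0$ and $C$ is simple) is consistent with the $O(n)$ cost. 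And $O(n\log m)\subseteq O(n\log n)$ since $m\le n$.

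The \textbf{main obstacle} is the floor/ceiling bookkeeping that ties each leaf's depth to its size: the recursion splits into parts of sizes $\lceil n/2\rceil$ and $\lfloor n/2\rfloor$, so node sizes are only approximately $n/2^{j}$, and one must argue cleanly that $|\,d_i - \log(n/\ell_i)\,| = O(1)$ and that, after multiplying by $\ell_i$ and summing, this $O(1)$ slack perturbs the bound only by an additive $O(n)$. A secondary point to nail down is that the depth-$j$ nodes of $\mathcal{T}$ really do partition the alive point set, so that the per-level cost is exactly $O(|S_j|)$ and the telescoping identity $\sum_j |S_j| = n + \sum_i \ell_i d_i$ holds; everything else is routine.
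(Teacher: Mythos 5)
Your proof is correct, and it reworks the paper's informal ``cost of a perfect binary tree minus savings from the leaves'' accounting into a direct per-level charging argument that arrives at the same place. The paper writes $T(C)$ as $O(n\log n)$ minus $\Omega\bigl(\sum_i(\ell_i\log\ell_i-\ell_i)\bigr)$, an ``$O$ minus $\Omega$'' step that only yields the stated bound because the two hidden constants coincide (both come from the $\Theta(|v|)$ cost per node), which the derivation does not make explicit. Your argument sidesteps that subtraction entirely: the exact identity $\sum_{j}|S_j| = n + \sum_i\ell_i d_i$, combined with $|d_i - \log(n/\ell_i)|=O(1)$ (itself a consequence of every depth-$j$ node having size in $[\lfloor n/2^j\rfloor,\lceil n/2^j\rceil]$, which you would establish by a short induction on $j$), gives the two-sided bound $T(C)=\Theta\bigl(n(1+\mathcal{H}(\ell_1,\dots,\ell_m))\bigr)$ outright. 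That two-sidedness is also exactly what makes the inclusion $O(n(1+\mathcal{H}))\subseteq O(\omega n)$ clean: one can either pull back the earlier lemma's $O(\omega n)$ running-time bound to the function $n(1+\mathcal{H})$ via the $\Theta$ relation, or, as you also note, bound $\sum_j|S_j|\le\omega\sum_j\lceil n/2^j\rceil=O(\omega n)$ directly. The one boundary case you flag is already present in the paper's statement rather than introduced by you: when $m=1$, $n\log m=0$ while $n(1+\mathcal{H})=n$, so the middle inclusion should really read $O(n(1+\log m))$ or implicitly assume $m\ge 2$.
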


Is there a relationship between the vector
$\langle \ell_1, \dots, \ell_m \rangle$ formed by the sizes of the subchains represented by the leaves of the recursion tree in the execution of the {\tt{Test-And-Divide}} algorithm from on input $C$
and the vector $\langle n_1, \dots, n_{\kappa} \rangle$ formed by the sizes of a partition of $C$ into $\kappa$ simple subchains?
\end{LONG}
%

For a given polygonal chain, there can be several partitions into simple subchains of minimum size $\kappa$ for it. The Levcopoulos et al.'s analysis is in the worst case over instances for $n$ and $\kappa$ fixed. We describe below a refined analysis which takes into account the relative imbalance between the sizes of the subchains. The idea behind the refinement is to bound the number of operations that the algorithm executes for every simple subchain. This analysis makes it possible to identify families of instances where the complexity of the algorithm is linear even though the number of simple subchains into which the chain is split is logarithmic.

\begin{theorem}
\label{theo:simple}
Let $\langle n_1, \dots, n_{\kappa} \rangle$ be the vector formed by the sizes of the subchains of any partition $\Pi$ of the chain $C$ into 
the minimum number $\kappa$ of simple subchains. The time complexity of the {\tt{Test-And-Divide}} algorithm on input $C$ is within
$O(n(1+\mathcal{H}(n_1, \dots, n_{\kappa}))) \subseteq O(n(1{+}\log{\kappa})) \subseteq O(n\log{n})$, which is worst-case optimal in the comparison model over instances of $n$ points that can be partitioned into $\kappa$ simple subchains of sizes $\langle n_1, \dots, n_{\kappa} \rangle$.
\end{theorem}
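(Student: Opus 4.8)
The plan is to prove two things: an upper bound of $O(n(1+\mathcal{H}(n_1,\dots,n_\kappa)))$ on the running time of {\tt{Test-And-Divide}}, and a matching lower bound $\Omega(n(1+\mathcal{H}(n_1,\dots,n_\kappa)))$ in the comparison model. For the upper bound, I would work with the recursion tree of {\tt{Test-And-Divide}} on $C$, whose cost is the sum over all nodes of a quantity linear in the size of the subchain at that node. The crucial observation, which the ``Width Analysis'' warm-up foreshadows, is that a node becomes a leaf as soon as its subchain is simple, and any simple subchain of the minimum partition $\Pi$ cannot be cut too many times before the algorithm detects simplicity. More precisely, I would charge the work at a node $v$ of depth $d$ representing a subchain of size $s_v$ to the blocks of $\Pi$ that overlap $v$; since the subchain at $v$ has length $n/2^{d}$ (up to rounding) and since a block of size $n_i$ can overlap subchains on at most $O(1+\log(n/n_i))$ distinct levels in a non-trivial way (below that depth the piece of the block is its own simple subchain and the recursion stops), summing $n_i$ times $O(1+\log(n/n_i))$ over $i$ gives exactly $O(n(1+\mathcal{H}(n_1,\dots,n_\kappa)))$. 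The two inclusions $\subseteq O(n(1{+}\log\kappa)) \subseteq O(n\log n)$ are then just the standard bounds $\mathcal{H}(n_1,\dots,n_\kappa)\le\log\kappa\le\log n$ on the entropy.

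To make the charging argument precise I would introduce the partition $\Pi = \langle B_1,\dots,B_\kappa\rangle$ and note that at any level $d$ of the recursion tree, the live (non-leaf) subchains form a partition of a contiguous portion of $C$ into intervals of length roughly $n/2^d$. A block $B_i$ of $\Pi$ can touch at most a bounded number of these intervals at any fixed level; and $B_i$ contributes to the cost only on levels $d$ with $n/2^d \gtrsim n_i$, i.e.\ $d \lesssim \log(n/n_i)$, because once the recursion has cut down to pieces shorter than $n_i$, any sub-piece lying entirely inside $B_i$ is simple (a sub-chain of a simple chain is simple) and is therefore a leaf. Hence the total cost attributable to $B_i$ is $O(n_i)$ per relevant level times $O(1+\log(n/n_i))$ levels, i.e.\ $O(n_i(1+\log(n/n_i)))$; summing over $i$ and using $\sum_i n_i = n$ yields the claimed bound. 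I would need to handle the bookkeeping of the $\lceil\cdot\rceil/\lfloor\cdot\rfloor$ rounding and the $O(1)$ extra blocks created at each cut ($\kappa_1+\kappa_2\le\kappa+1$), but these only inflate constants.

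For the lower bound, the plan is an information-theoretic / adversary argument in the comparison model: the class of inputs of $n$ points that decompose into $\kappa$ simple subchains of sizes $\langle n_1,\dots,n_\kappa\rangle$ is large enough that any comparison-based algorithm must, in the worst case, perform $\Omega(n(1+\mathcal{H}(n_1,\dots,n_\kappa)))$ comparisons. Concretely I would embed a sorting instance: place the $i$-th subchain as a convex (hence simple) chain confined to a narrow horizontal slab, so that computing the convex hull of the whole input forces the algorithm to merge these $\kappa$ sorted sequences of lengths $n_1,\dots,n_\kappa$, and a merge of sorted sequences of these sizes requires $\Omega(n(1+\mathcal{H}(n_1,\dots,n_\kappa)))$ comparisons by the standard counting bound $\log\binom{n}{n_1,\dots,n_\kappa} = \Theta(n\,\mathcal{H}(n_1,\dots,n_\kappa))$, plus the trivial $\Omega(n)$ to read the input. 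I expect the main obstacle to be the upper-bound charging argument: one must argue carefully that a block of $\Pi$ really does stop generating cost below depth $\log(n/n_i)$ even though the recursion cuts at geometric midpoints of the \emph{current} subchain rather than at block boundaries — the point being that a contiguous sub-piece of $B_i$ is simple regardless of where the cuts fall, so the only ``expensive'' nodes for $B_i$ are those whose subchain straddles an endpoint of $B_i$, and there are $O(1)$ of those per level. Making this ``straddling'' accounting airtight, together with the interaction of several blocks at one node, is the technical heart of the proof.
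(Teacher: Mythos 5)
Your upper bound is essentially the paper's own argument, and it is correct. The paper fixes a block $c_i$ of size $n_i$ in $\Pi$ and argues exactly as you do: the points of $c_i$ contribute at most $O(n_i)$ work per level down to depth roughly $\lceil\log(n/n_i)\rceil$; below that depth a node of the recursion tree falls entirely inside $c_i$ and becomes a leaf (a contiguous sub-piece of a simple chain is simple), so thereafter only the $O(1)$ ``straddling'' nodes touch $c_i$ and their contribution decays geometrically, summing to $O(n_i)$. This gives $O(n_i\log(n/n_i)+n_i)$ per block, hence $O(n + \sum_i n_i\log(n/n_i)) = O(n(1+\mathcal{H}(n_1,\dots,n_\kappa)))$. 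Your instinct that the ``straddling'' accounting is the technical heart is right, and the paper is equally terse about it; nothing you propose would fail.

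The lower bound is where you diverge and where the sketch has a real gap. The paper reduces from the Barbay--Navarro sorting lower bound by lifting each value $r$ to the point $(r,r^2)$ on the parabola $y=x^2$, so that \emph{every} input point lies on the convex hull in $x$-sorted order; the $\rho$ alternating runs of the permutation become $\rho$ simple subchains of the same sizes, and the sorted order is read off the hull in linear time. Your proposed construction --- ``place the $i$-th subchain as a convex chain confined to a narrow horizontal slab'' --- does not achieve this: if the $\kappa$ slabs are vertically stacked strips, all but the extreme ones are swallowed into the interior of the hull, so the output no longer encodes a merge of all $\kappa$ sequences and no $\Omega(n\mathcal{H})$ information is forced. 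You need the points to be in convex position (or at least for the hull to reveal the full sorted order), and the parabola lift is the standard way to guarantee this. Additionally, for the class of instances fixed by the theorem you must argue that the constructed chain genuinely cannot be cut into fewer than $\kappa$ simple subchains of the required sizes --- the paper gets this from the alternation of increasing and decreasing runs, whereas disjoint slabs, each simple on its own, could plausibly concatenate into a single simple chain. Your counting estimate $\log\binom{n}{n_1,\dots,n_\kappa}\in\Theta(n\,\mathcal{H}(n_1,\dots,n_\kappa))$ is the right quantity and is interchangeable with citing Barbay--Navarro, but the geometric embedding needs to be the parabola (or equivalent) for the reduction to go through.
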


\begin{proof}
Fix the subchain $c_i$ of size $n_i$ in $\Pi$. In the worst case, the algorithm considers the $n_i$ points of $c_i$ for the simplicity test, and
the merging process in all the levels of the recursion tree from the first level to the level $\lceil \log{\frac{n}{n_i}} \rceil + 1$, because the sizes of the subchains in these levels are greater than $n_i$. In the next level, one of the nodes $\ell$ of the recursion tree fits completely inside $c_i$ and therefore it becomes a leaf. Hence, at least $\frac{n_i}{4}$ points from $c_i$ are dismissed for the following iterations. The remaining points of $c_i$ are in the left or the right ends of subchains represented by nodes in the same level of $\ell$ 
in the recursion tree. In all of the following levels, the number of operations of the algorithm involving points from $c_i$ can be bounded by the size of the subchains in those levels. So, the sum of the number of these operations in these levels is within $O(n_i)$.
As a result, the number of operations
of the algorithm involving points from $c_i$ is within $O(n_i\log{\frac{n}{n_i}} + n_i)$. In total, the time complexity of the algorithm is within 
$O(n + \sum_{i=1}^\kappa n_i\log{\frac{n}{n_i}}) = O(n(1+\mathcal{H}(n_1, \dots, n_\kappa))) \subseteq O(n(1{+}\log{\kappa})) \subseteq O(n\log{n})$. 

We prove the optimality of this complexity by giving a tight lower bound.
Barbay and Navarro~\cite{2013-TCS-OnCompressingPermutationsAndAdaptiveSorting-BarbayNavarro} showed a lower bound of $\Omega(n(1+{\cal H}(r_1,\ldots,r_\rho)))$ in the comparison model
for {\sc{Sorting}} a sequence of $n$ numbers, in the worst case over instances covered by $\rho$ runs (increasing or decreasing) of lengths $r_1, \dots, r_\rho$, respectively, summing to $n$. 
The {\sc{Sorting}} problem can be reduced in linear time to the problem of computing 
the {\sc{Convex Hulls}} of a chain of $n$ planar points that can be cut into 
$\rho$ simple subchains of lengths $r_1, \dots, r_\rho$, respectively. For each real number $r$, this is done by producing
a point with $(x,y)$-coordinates $(r,r^2)$. The $\rho$ runs (alternating increasing and decreasing) are transformed into $\rho$ simple subchains of the same lengths. 
The sorted sequence of the numbers can be obtained from the {\sc{Convex Hull}} of the points in linear time. \qed
\end{proof}

\begin{INUTILE}
\begin{lemma}
Let $\langle n_1, \dots, n_{\kappa} \rangle$ be the lengths of the subchains of a partition of the chain $C$ into 
the minimum number $\kappa$ of simple subchains. Let $\langle \ell_1, \dots, \ell_m \rangle$ be the vector formed by the sizes 
of the subchains represented by the leaves of the recursion tree in the execution of the algorithm $A$ on input $C$. 
Then the number $\mathcal{S} \in \Omega(\sum_1^m \ell_i \log{\ell_i})$ of operations ``saved'' by the algorithm is within $\Omega(\sum_1^\kappa n_i \log n_i)$.
\end{lemma}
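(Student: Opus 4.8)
The plan is to prove the claimed lower bound on $\mathcal{S}$ by relating the leaf-size vector $\langle\ell_1,\dots,\ell_m\rangle$ directly to the vector $\langle n_1,\dots,n_\kappa\rangle$ of any minimum partition $\Pi$ of $C$. First I would recall the estimate of $\mathcal{S}$ already used above: the execution of {\tt{Test-And-Divide}} on $C$ is a pruned version of the perfect binary recursion tree on $n$ points, whose total node-size is $\Theta(n\log n)$; each leaf of size $\ell_i$ cuts off a balanced subtree of cost $\Theta(\ell_i\log\ell_i)$ while the algorithm still pays $O(\ell_i)$ for the simplicity test and Melkman's routine at that leaf, so $\mathcal{S}=\Theta\bigl(\sum_{i=1}^m\ell_i\log\ell_i\bigr)-O(n)$. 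Hence it suffices to show $\sum_{i=1}^m\ell_i\log\ell_i=\Omega\bigl(\sum_{j=1}^\kappa n_j\log n_j\bigr)$, up to an additive $O(n)$ that is in any case absorbed by the $n(1+\mathcal{H}(\cdot))$ form of the bound.

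The heart of the argument is a charging scheme from the subchains of $\Pi$ to the leaves of the recursion tree, reusing the combinatorial fact already exploited in the proof of Theorem~\ref{theo:simple}: for each subchain $c_j$ of $\Pi$, which occupies a contiguous window of $n_j$ input positions, there is a dyadic node $v_j$ of the recursion tree whose position interval lies inside that window and has size at least $n_j/4$. Since a contiguous sub-polyline of a simple polyline is simple, $v_j$ represents a simple subchain, so on the root-to-$v_j$ path the topmost node with a simple subchain is a leaf $L_j\supseteq v_j$, whence $\operatorname{size}(L_j)\ge n_j/4$. Because the windows of the $c_j$ are pairwise disjoint up to shared endpoints, the blocks $v_j$ are pairwise disjoint up to endpoints (a total correction of $O(\kappa)=O(n)$ points), so, writing $J(L)=\{\,j:L_j=L\,\}$ and $N_L=\sum_{j\in J(L)}n_j$, every leaf $L$ satisfies $\operatorname{size}(L)\ge\tfrac14 N_L$ up to that correction. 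Summing over leaves, using that $x\mapsto x\log x$ is nondecreasing for $x\ge1$ and that $\log N_L\ge\log n_j$ for each $j\in J(L)$, one obtains
\[
\sum_{i=1}^{m}\ell_i\log\ell_i=\sum_{L}\operatorname{size}(L)\log\operatorname{size}(L)\;\ge\;\frac14\sum_{L}N_L\log\frac{N_L}{4}\;\ge\;\frac14\sum_{j=1}^{\kappa}n_j\log\frac{n_j}{4}\;=\;\frac14\sum_{j=1}^{\kappa}n_j\log n_j-O(n),
\]
using $\sum_j n_j=O(n)$. Feeding this into $\mathcal{S}=\Theta\bigl(\sum_i\ell_i\log\ell_i\bigr)-O(n)$ gives $\mathcal{S}\in\Omega\bigl(\sum_{j=1}^\kappa n_j\log n_j\bigr)$.

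The step I expect to be the main obstacle is the charging scheme itself: a single leaf $L$ may straddle parts of several consecutive subchains of $\Pi$ and be the host $L_j$ of all of them, so a one-to-one charge fails. What rescues the argument is superadditivity --- disjointness of the $v_j$ forces $\operatorname{size}(L)$ to be at least a constant fraction of $N_L$, and since $\log N_L$ dominates each $\log n_j$ with $j\in J(L)$, a large leaf absorbing many subchains still accounts for $\Omega\bigl(\sum_{j\in J(L)}n_j\log n_j\bigr)$ saved operations. The remaining points (shared endpoints contributing $O(n)$, the $O(\ell_i)$ paid per leaf, replacing $\log\tfrac{n_j}{4}$ by $\log n_j$, and the bounded-below contribution of leaves with $N_L$ constant) are routine additive-$O(n)$ bookkeeping, harmless because Theorem~\ref{theo:simple} states its bound in a form tolerating an additive linear term.
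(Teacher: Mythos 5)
Your proof is correct and reaches the lemma by essentially the same route as the paper's: the key combinatorial fact in both is that every subchain $c_j$ of the optimal partition $\Pi$ forces a recursion-tree leaf of size at least $n_j/4$, after which the bound follows by summing $\ell\log\ell$ over leaves and discarding an additive $O(n)$ term, arriving in both cases at $\sum_i\ell_i\log\ell_i\ge\frac14\sum_j n_j\log n_j-O(n)$. The two write-ups derive that $n_j/4$ fact differently. The paper's proof argues it by examining where the cut points fall relative to $\Pi$ --- first bounding the extreme leaves $\ell_1>n_1/2$ and $\ell_m>n_\kappa/2$, then case-analysing whether a cut lands on a boundary of $\Pi$ or strictly inside some $c_j$, and recursing. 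You instead invoke the dyadic-covering observation that any window of $n_j$ consecutive positions contains a node of the unpruned binary split tree of size at least $n_j/4$, and then charge $c_j$ to the topmost simple ancestor of that node, which is necessarily a leaf of the pruned recursion tree. Your packaging actually handles more cleanly the delicate point that several consecutive $c_j$ can be absorbed by the same leaf $L$: you make the superadditivity explicit via the pairwise disjointness of the $v_j$ and the aggregate size $N_L$, whereas the paper gestures at it with a single cryptic inequality of the form $\ell_i>n_j+\tfrac{n_{j+1}}{2}\Rightarrow\ell_i\log\ell_i>n_j\log n_j+\cdots$ that is never clearly integrated into the final sum. Two small wording points: the $v_j$ should be described as nodes of the \emph{full} (unpruned) dyadic tree rather than of ``the recursion tree,'' since {\tt Test-And-Divide} need not ever visit them --- only their ancestor $L_j$ is guaranteed to be visited; and when $n$ is not a power of two the algorithm's split intervals are only approximately dyadic, though the covering argument and the constant $1/4$ survive this.
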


\begin{proof}
The subchain of length $2\ell_1$ that starts in the first point of $C$ is not simple. So, the length $n_1$ of the first simple
subchain of $\Pi$ is less than $2\ell_1$, implying that $\ell_1 > \frac{n_1}{2}$. 
In a similar way, the chain of length $2\ell_m$ that ends in the last point of $C$ is not simple. So, the length $n_\kappa$ of the 
last simple subchain of $\Pi$ is less than $2\ell_m$, implying that $\ell_m > \frac{n_{\kappa}}{2}$. Also, $\ell_1 < n_1 + n_2$ and $\ell_m < n_{\kappa-1} + n_{\kappa}$
because otherwise the first two chains of $\Pi$ of lengths $n_1$ and $n_2$ or the last two chains of $\Pi$ of lengths $n_{\kappa-1}$ and $n_\kappa$ 
could be merged into a simple single subchain, contradicting that $\kappa$ is minimum.
If there exist $i$, $j$ such that $\ell_i > n_j + \frac{n_{j+1}}{2}$, then $\ell_i\log{\ell_i} > n_j{\log{n_j}} + \frac{n_{j+k}}{2} \log{\frac{n_{j+k}}{2}}$.
When the algorithm $A$ cuts the chain $C$ into $C'$ and $C''$, the cutting point $p$ could be a partitioning point of $\Pi$ or could lie inside an
element of $\Pi$. In the first case, $p$ is the last and the first point of two consecutive elements of $\Pi$ of lengths $n_j$ and $n_{j+1}$ such that $\sum_1^j n_i = \sum_{j+1}^\kappa n_i$.
In such a case, similar to the analysis of $\ell_1$ and $\ell_{\kappa}$ there exist $\ell_p$ and $\ell_{p+1}$ such that $\ell_p > \frac{n_j}{2}$ and $\ell_{p+1} > \frac{n_{j+1}}{2}$ and 
$l_p < n_{j-1} + n_j$ and $l_{p+1} < n_{j+1} + n_{j+2}$. In the second case, $p$ lies inside an element of $\Pi$ of length $n_j$ such that $ \sum_1^j n_i > \frac{n}{2} > \sum_1^{j-1} n_i$.
In such a case, one of the subchains of lengths $n_j'$ and $n_j''$ into which $p$ partitions the chain of lengths $n_j$ is larger than $\frac{n_j}{2}$. So, there
exists $\ell_p$ such that $\ell_p > \frac{n_j}{4}$.
As a consequence: $\sum_1^m \ell_i\log{\ell_i} > \sum_1^\kappa \frac{n_i}{4}\log{\frac{n_i}{4}} = \frac{1}{4}\sum_1^\kappa{n_i\log{n_i}}- \frac{1}{2}n$. \qed
\end{proof}
\end{INUTILE}

\subsection{Computing Delaunay Triangulations: Adaptivity to Monotone Histograms}
\label{sec:monotone}

We propose a new input-order based algorithm for computing the {\sc{Delaunay triangulation}} and the {\sc{Voronoi diagram}} of a set of $n$ planar points. A \emph{monotone histogram} is a sequence of points sorted with respect to two orthogonal directions. The algorithm takes advantage of the minimal number $\mu$ of monotone histograms and their sizes into which a polygonal chain can be cut. A monotone histogram is also a simple polygonal chain. Therefore, an algorithm for computing {\sc{Convex Hulls}} adaptive to the decomposition of the input into monotone histograms
is obtained as a corollary of Theorem~\ref{theo:simple}. We extend the refined analysis to the computation of {\sc{Delaunay triangulations}} and {\sc{Voronoi diagrams}} adaptive to the decomposition of
the input into monotone histograms.

Djidjev and Lingas~\cite{1995-IJCGA-OnComputingVoronoiDiagramsForSortedPointSets-DjidjevLingas} described an algorithm which, given a monotone histogram, computes the {\sc{Voronoi diagram}} (and hence the {\sc{Delaunay triangulation}}) of the input sequence in linear time. This algorithm suggests a way to partition the input into subsequences such that the {\sc{Delaunay triangulation}} of each subsequence can be computed in linear time in its length. The partitioning algorithm cuts the sequence into $\mu$ monotone histograms with respect to two orthogonal directions $d_1$ and $d_2$. The first two points of the subsequence determine the ordering defined by the combination of ascending and descending with respect to $d_1$ and $d_2$.
\begin{LONG}
Given a sequence of points and two orthogonal directions, it is possible to test whether the sequence is a monotone histograms with respect to these two directions in linear time.
\end{LONG}

\begin{LONG}
\subsubsection{Binary Merge of Voronoi Diagrams.}
\label{sec:vormerge}

Kirkpatrick~\cite{1979-FOCS-EfficientComputationOfContinuousSkeletons-Kirkpatrick} described 
a linear time algorithm for the merging of two arbitrary {\sc{Voronoi diagrams}}. 
Given the {\sc{Voronoi diagrams}} of two disjoint point sets $P$ and $Q$, the algorithm finds the {\sc{Voronoi diagram}} of 
$P \cup Q$ in time within $O(|P| + |Q|)$. 
The plane is partitioned into points closer to $P$,
points closer to $Q$, and points equidistant from $P$ and $Q$.
The points equidistant from $P$ and $Q$ are defined as the \emph{contour}
separating $P$ and $Q$. The \emph{contour} is composed of straight line segments: it is formed
from the edges of the {\sc{Voronoi diagram}} of $P \cup Q$ that separates the points in $P$ from the points in $Q$. Inside the region of points closer to $P$ (resp. $Q$) the {\sc{Voronoi diagram}} of $P \cup Q$ and the {\sc{Voronoi diagram}} of $P$
(resp. $Q$) are identical. Thus, the merging of two {\sc{Voronoi diagrams}} can be seen as the process of cutting the {\sc{Voronoi 
diagrams}} of $P$ and $Q$ along the contour.

This leads to a divide-and-conquer algorithm for constructing 
the {\sc{Voronoi diagram}} of a set of $n$ points and hence for computing the {\sc{Delaunay triangulation}}
in time within $O(n\log n)$.
This time complexity of $O(n\log n)$ is asymptotically optimal in the comparison model in the worst case over instances composed of $n$ points. Shamos and Hoey~\cite{1975-FOCS-ClosestPointProblems-ShamosHoey} showed that the construction of any triangulation over $n$ points requires $\Omega(n\log n)$, as {\sc{Sorting}} can be reduced to computing the triangulation of $n+1$ points, which yields an asymptotic computational lower bound of $\Omega(n\log n)$ in the worst case over sets of $n$ planar points, in the comparison model.
\end{LONG}

\subsubsection{Multiary Merge.}
\label{sec:multiary-merge}

\begin{SHORT}
Kirkpatrick~\cite{1979-FOCS-EfficientComputationOfContinuousSkeletons-Kirkpatrick} described 
a linear time algorithm for the merging of two arbitrary {\sc{Voronoi diagrams}}.
This leads to a divide-and-conquer algorithm for constructing 
the {\sc{Voronoi diagram}} of a set of $n$ points and hence for computing the {\sc{Delaunay triangulation}}
in time within $O(n\log n)$.
\end{SHORT}

Given $\mu$ {\sc{Delaunay triangulations}} of sizes $v_1, \dots, v_{\mu}$, respectively, to be merged, we make a sequence of binary merges, reducing at each step the number of {\sc{Delaunay triangulations}} to be merged by 1. 
The merging process can be represented by a binary tree where the internal nodes are the merged {\sc{Delaunay triangulations}}, and the leaves are the original 
$\mu$ {\sc{Delaunay triangulations}}. Merging the two {\sc{Delaunay triangulations}} of minimum sizes at each step, similar as the Huffman code algorithm~\cite{1952-IRE-AMethodForTheInstructionOfMinimumRedundancyCodes-Huffman}, further improves the merging process, which takes advantage of the potential disequilibrium in the distribution of the points between the $\mu$ {\sc{Delaunay triangulations}}. 
\begin{LONG}
We can apply the Huffman~\cite{1952-IRE-AMethodForTheInstructionOfMinimumRedundancyCodes-Huffman} algorithm to the vector 
$\langle v_1, \dots, v_{\mu} \rangle$, thus obtaining a Huffman-shaped tree representing the merging process.
\end{LONG}

\begin{LONG}
\begin{lemma}[Multiary Merge]\label{lem:multiaryMerge}
Given $\mu$ {\sc{Delaunay triangulations}} of respective sizes $\langle v_1,\ldots,v_\mu \rangle$ summing to $n=\sum_{i=1}^\mu v_i$, there is an algorithm computing the {\sc{Delaunay triangulation}} of the $n$ points in time within $O(n(1+{\cal H}(v_1, \dots, v_\mu)))\subseteq O(n(1+\log\mu)) \subseteq O(n\log{n})$.
\end{lemma}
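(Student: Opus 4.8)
The plan is to bound the running time of the binary-merge process already described: combine the $\mu$ given {\sc{Voronoi diagrams}} (equivalently, {\sc{Delaunay triangulations}}, which are interchangeable in linear time) through a sequence of binary merges shaped as a Huffman tree $\mathcal{T}$ built on the weights $\langle v_1,\dots,v_\mu\rangle$, and then read off the cost from the Huffman bound. The two facts I would lean on are both available: Kirkpatrick's procedure merges two {\sc{Voronoi diagrams}} on disjoint point sets of total size $s$ into the {\sc{Voronoi diagram}} of their union in time within $O(s)$, and sorting the $\mu$ weights lets $\mathcal{T}$ be constructed in $O(\mu\log\mu)$ time.

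Concretely, I would place the $\mu$ input diagrams at the leaves of $\mathcal{T}$ and process the internal nodes bottom-up, at each internal node $u$ merging the {\sc{Voronoi diagrams}} stored at its two children. The diagram produced at $u$ covers exactly the $\mathrm{size}(u) := \sum_{i \in L(u)} v_i$ points lying in the leaves $L(u)$ of the subtree rooted at $u$, so by Kirkpatrick's bound that merge costs time within $O(\mathrm{size}(u))$; the root then holds the {\sc{Voronoi diagram}} of all $n$ points, and a final linear-time conversion yields the {\sc{Delaunay triangulation}}.

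For the analysis, summing the per-node costs and exchanging the order of summation, each leaf $i$ is charged $v_i$ once for every internal ancestor, that is, $v_i d_i$ where $d_i$ is the depth of leaf $i$ in $\mathcal{T}$; hence the total merging cost is $\Theta(\sum_{i=1}^{\mu} v_i d_i)$. Since $\mathcal{T}$ is a Huffman tree for the distribution $\langle v_1/n,\dots,v_\mu/n\rangle$, the bound ``average codeword length $\le$ entropy $+\,1$'' gives $\sum_i (v_i/n)\, d_i \le \mathcal{H}(v_1,\dots,v_\mu)+1$, so $\sum_i v_i d_i \le n(1+\mathcal{H}(v_1,\dots,v_\mu))$. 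The $O(\mu\log\mu)$ preprocessing is absorbed because $v_i\ge 1$ and $\log$ is concave, whence $n\,\mathcal{H}(v_1,\dots,v_\mu)=\sum_i v_i\log\frac{n}{v_i}\ge\sum_i\log\frac{n}{v_i}\ge\mu\log\mu$; and the two displayed inclusions follow from $\mathcal{H}(v_1,\dots,v_\mu)\le\log\mu\le\log n$.

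The step I expect to be delicate is the appeal to Kirkpatrick's linear-time merge at the interior nodes of $\mathcal{T}$: there the two diagrams being combined are {\sc{Voronoi diagrams}} of unions of several of the original histograms, point sets that are in general not linearly separable and may be spatially interleaved, so I must make sure the merge really runs in linear time on such inputs and not only on well-separated ones. This is exactly the property attributed to Kirkpatrick's algorithm, so the heart of the proof is to invoke that property explicitly; the remainder is the routine Huffman-tree bookkeeping outlined above.
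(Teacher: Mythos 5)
Your proof is correct and follows essentially the same route as the paper's: build a Huffman tree on the weights $\langle v_1,\dots,v_\mu\rangle$, perform Kirkpatrick's linear-time binary merges along it, charge each leaf $v_i\cdot d_i$ for its depth $d_i$, and invoke the Huffman entropy bound to get $\sum_i v_i d_i \le n(1+\mathcal{H}(v_1,\dots,v_\mu))$. Your explicit absorption of the $O(\mu\log\mu)$ tree-building overhead via $n\,\mathcal{H}(v_1,\dots,v_\mu)\ge\mu\log\mu$ is a small extra care the paper leaves implicit, and your flag on Kirkpatrick's merge working for spatially interleaved (not merely separable) point sets is exactly the property the paper attributes to it.
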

\end{LONG}

\begin{LONG}
\begin{proof}
The algorithm follows the same steps as the algorithm suggested by Huffman~\cite{1952-IRE-AMethodForTheInstructionOfMinimumRedundancyCodes-Huffman} on a set of $\rho$ messages of probabilities $\{r_i/n\}_{i\in[1..\rho]}$:
\begin{enumerate}
\item Initialize a heap $H$ with the $\rho$ {\sc{Voronoi diagrams}}, indexed by their size;
\item While $H$ contains more than one {\sc{Voronoi diagram}}
  \begin{itemize}
\item extract the two smallest {\sc{Voronoi diagrams}} from $H$, of respective sizes~$n_1$ and~$n_2$,
\item merge them into a {\sc{Voronoi diagram}} $T$ of size $n_1 + n_2$, and
\item insert $T$ in $H$.
  \end{itemize}
  \end{enumerate}

This algorithm executes in time within $O(n(1+{\cal H}(v_1,\ldots,v_\mu)))$.
The merging process is then represented by a tree with the shape of a Huffman~\cite{1952-IRE-AMethodForTheInstructionOfMinimumRedundancyCodes-Huffman}
tree. Consider the $i$-th {\sc{Voronoi diagram}} of the input $\forall i\in[1..\mu]$: let $c_i$ be the binary string describing the path leading from the root to the corresponding leaf, and $l_i$ the length of this path. The sum of the computational costs of the binary merges is the sum of the sizes of the {\sc{Voronoi diagram}} computed. The $i$-th {\sc{Voronoi diagram}} contributes a cost within $O(v_i)$ to $l_i$ levels, which has a sum within $O(\sum_{i=1}^\mu l_i v_i)$.
Consider the binary tree where the $\mu$ initial {\sc{Voronoi diagrams}} are leaves and the $\mu-1$ computed {\sc{Voronoi diagrams}} are internal nodes:
\begin{itemize}
\item The set of binary strings $\{c_1,\ldots,c_\mu\}$ is a prefix free code, i.e. no code is prefix of another root-to-leaf path, simply because they are paths in a tree.
\item The lengths of those codes minimize $\sum_{i=1}^\mu l_i r_i$ as a property of Huffman~\cite{1952-IRE-AMethodForTheInstructionOfMinimumRedundancyCodes-Huffman} codes.
\end{itemize}
By the optimality of Huffman codes,  this complexity is within a linear term of the entropy of the distribution $(v_1,\ldots,v_\mu)$, i.e. $\sum_{i=1}^\mu l_i r_i\in O(n(1+{\cal H}(v_1,\ldots,v_\mu)))$.
This yields the final time complexity, within $O(n(1+{\cal H}(v_1,\ldots,v_\mu)))$. \qed
\end{proof}
\end{LONG}

The combination of the partitioning algorithm and the merging process yields an optimal algorithm computing these structures 
adaptive to the decomposition of the input into monotone histograms.

\begin{theorem}
Let $d_1$ and $d_2$ be two perpendicular directions. Let $S$ be a sequence of $n$ planar points. Let $\mu$ and $\langle v_1,\ldots,v_\mu \rangle$ be the minimum number of monotone histograms with respect to $d_1$ and $d_2$ and the sizes of these monotone histograms, respectively, in which $S$ can be cut. The {\sc{Delaunay triangulation}} and the {\sc{Voronoi diagram}} of $S$ can be computed in time within $O(n(1+{\cal H}(v_1,\ldots,v_\mu))) \subseteq O(n(1{+}\log{\mu})) \subseteq O(n\log{n})$, which is worst-case optimal in the comparison model over instances of $n$ points that can be partitioned into monotone histograms of sizes $\langle v_1, \dots, v_{\mu} \rangle$.
\end{theorem}

\begin{LONG}
\begin{proof}
The combination of the partitioning algorithm and the merging process yields an algorithm computing these structures within this time.
In order to provide a lower bound, we use again the result of $\Omega(n(1+{\cal H}(r_1,\ldots,r_\rho)))$ for {\sc{Sorting}} a sequence of $n$ numbers, in the worst case over instances
covered by $\rho$ runs of lengths $r_1, \dots, r_\rho$ summing $n$ in the comparison model, demonstrated by Barbay and Navarro~\cite{2013-TCS-OnCompressingPermutationsAndAdaptiveSorting-BarbayNavarro}.
This problem can
be reduced in linear time to the problem of computing the {\sc{Delaunay triangulation}} of a sequence of $n$ planar points covered by $\rho$ monotone histograms of lengths $r_1, \dots, r_\rho$
with respect to the coordinates axes. 
The $\rho$ runs are transformed into $\rho$ monotone histograms of the same lengths, using points on the parabola, in linear time. The sorted sequence of the numbers can be obtained from the {\sc{Delaunay triangulation}} of the points in linear time. \qed
\end{proof}
\end{LONG}

\begin{SHORT}
The proof is similar to the one of Theorem~\ref{theo:simple}, we defer it to the extended version~\cite{2015-ARXIV-RefiningTheAnalysisOfDivideAndConquer-BarbayOchoaPerez}.
\end{SHORT}
%

Such examples of complete refinements are not the rule: in the following section we describe some examples where such refinements are problematic or impossible.

\section{Partial Refinements}
\label{sec:when}

We describe a new partitioning algorithm for a sequence of points \begin{LONG}in Sections~\ref{sec:incremental} to~\ref{sec:optimal}\end{LONG} such that the {\sc{Delaunay triangulation}}
of each subsequence can be computed in linear time in its length, and discuss alternate partitions\begin{LONG} in Section~\ref{sec:partitioning}\end{LONG}.
Each different partitioning algorithm, in combination with the merging process described in Section~\ref{sec:multiary-merge}, yields a different algorithm adaptive to the input order.

\begin{LONG}
\subsection{Incremental Construction}
\label{sec:incremental}
\end{LONG}

\begin{LONG}
Many Computational Geometry algorithms use tests known as the orientation and incircle tests~\cite{1997-DCG-AdaptivePrecisionFloatingPointArithmeticAndFastRobustGeometricPredicates-Shewchuk}. 
The \emph{orientation test} determines whether a point lies to the left of, to the right of, or on a line or 
plane defined by other points. The \emph{incircle test} determines whether a point lies inside, outside, 
or on a circle defined by other points.
\end{LONG}

Green and Sibson~\cite{1977-TCJ-ComputingDirichletTessellationsInThePlane-GreenSibson} proposed 
the first incremental algorithm for computing the {\sc{Delaunay triangulation}} of a point set which finds the 
triangle containing each new point, and updates the diagram by correcting the edges violating the 
circumcircle condition\begin{LONG} ---an operation named \emph{flipping}\end{LONG}.
\begin{LONG}
The algorithm adds points to the structure one by one. For each point, it performs two basic steps:

\begin{enumerate}
\item The algorithm finds the triangle containing the new point using the structure as a guide to the relative position of the points.
A greedy approach for locating the point is to start at an edge in the structure and to walk across adjacent edges in the 
direction of the new point until the correct triangle is found. Orientation tests~\cite{1997-DCG-AdaptivePrecisionFloatingPointArithmeticAndFastRobustGeometricPredicates-Shewchuk}
are performed on each edge of such a path to see whether the new point lies on the correct side of that edge. We call the operations
involved in this walk \emph{navigation} operations. 

\item It then updates the structure adding three new edges
from the point inserted to the vertices of the triangle containing the point and flips all invalid edges 
resulting from the insertion. Note that flipping an edge can make another edge invalid, but that each edge
is flipped at most once, so each insertion can trigger at most a linear number of flips.
\end{enumerate}
\end{LONG}
The time complexity of this incremental algorithm for computing the {\sc{Delaunay triangulation}} is within $O(dn)\subseteq O(n^2)$, where $d\in [1..n]$ is an upper bound on the amount of operations required to insert each point and to correct the {\sc{Delaunay triangulation}} for the instance being considered. We show the problems that arise when we try to adapt this algorithm in order to obtain a new one whose time complexity is within $O(n\log{d}) \subseteq O(n\log{n})$.

\begin{LONG}
\subsection{Adaptivity to D-linear Runs}
\label{sec:adaptivity}

Consider an incremental algorithm $G$ computing the {\sc{Delaunay triangulation}} of $n$ planar 
points in time within $O(n)$ in the best case; and an algorithm $M$ merging two {\sc{Delaunay triangulations}}
of sizes summing to $n$ in time within $O(n)$ in the worst case.
Given a sequence $S$ of $n$ distinct planar points and a constant $k$, the following naive algorithm computes 
its {\sc{Delaunay triangulation}} in time within $O(n)$ in the best case, in time within $O(n\log n)$ in the worst case, 
and in time within $O(n(1+\mathcal{H}(r_1, \dots, r_\rho))) \subseteq O(n\log\rho)\subseteq O(n\log n)$ in the general case,
where $\rho\in[1..n]$ and $r_1, \dots, r_\rho$ (such that $n=\sum_{i=1}^\rho r_i$) measure the difficulty of partitioning the instance into ``easy'' subinstances:

\begin{enumerate}
\item Run the incremental algorithm $G$ on $S$ until, for the $i$-th point $p$, it performs either more 
than $k$ operations to locate the point on the triangulation, or more than $k$ flip operations on it.
\item Store the {\sc{Delaunay triangulation}} of the $i-1$ first points, and restart the greedy incremental algorithm $G$ 
on the sequence formed by $p$ and its $n-i$ successors in the input sequence $S$, until no points are left in~$S$.
\item Let $\rho\in[1..n]$ and $r_1, \dots, r_\rho$ (such that $n=\sum_{i=1}^\rho r_i$) be the number of {\sc{Delaunay triangulations}} computed in this way and the sizes
of these {\sc{Delaunay triangulations}}, respectively.  Merge the $\rho$ {\sc{Delaunay triangulations}}, in overall time within $O(n(1+\mathcal{H}(r_1, \dots, r_\rho))) \subseteq O(n(1{+}\log\rho)) \subseteq O(n\log{n})$.
\end{enumerate}
\end{LONG}

\begin{LONG}
\subsubsection{Greedy Partitioning.}
\label{sec:greedy-partitioning}
The algorithm described above suggests a way to partition the input into subsequences such that the {\sc{Delaunay triangulation}}
of each subsequence can be computed in linear time in its length.

We use the algorithm described to define ``easy'' sequences of points for the computation of the {\sc{Delaunay triangulation}}:
\end{LONG}


\begin{LONG}
\begin{definition}[D-linear]
Consider a sequence $S$ of $n$ distinct planar points. Given an integer value $k>0$, $S$ is \emph{$k$-D-linear} (for ``Delaunay linear'') if $G$ performs at most $k$ location and flip operations for each point while computing the {\sc{Delaunay triangulation}} of $S$.  If $S$ is $k$-D-linear and $k\in O(1)$ is a constant independent of $n$, we say that $S$ is \emph{D-linear}.
\end{definition}

Such a simple definition yields a simple partitioning of the input sequence into subsequences of consecutive positions such that the {\sc{Delaunay triangulation}} of each subsequence can be computed in linear time.
\begin{definition}[D-linear Run]
A \emph{D-linear run} in a sequence $S$ of points is a \emph{D-linear} subsequence formed by consecutive points in $S$.
\end{definition}
\end{LONG}
\begin{SHORT}
Consider a sequence $S$ of $n$ planar points. Given an integer value $k>0$, the partitioning algorithm greedily adds points to the \emph{run} while the incremental algorithm $G$ has not executed a number of operations exceeding the threshold $k$, i.e. the $i$-th point $p$ is added to the current \emph{run} if $G$ performs at most $k$ operations to locate $p$ in the triangulation and at most $k$ update operations on it. This algorithm yields a partition of the sequence $S$ into subsequences such that the {\sc{Delaunay triangulation}} of each subsequence can be computed in linear time.
\end{SHORT}
\begin{LONG}
\subsection{Non Optimality of D-linear Runs Partitioning}
\label{sec:optimal}

\begin{INUTILE}
Given a sequence $S$, an optimal partition of $S$ into D-linear runs is a partition with the minimum number of D-linear runs.
We show in this section that the greedy partitioning algorithm into D-linear runs does not always yield
a partition into the minimum possible number of D-linear runs.
\end{INUTILE}
\end{LONG}

\begin{SHORT}
  This greedy partitioning algorithm does not always yield a partition into the minimum possible number of \emph{runs}. It is possible to describe a family of sequences where for each sequence $S_n$ of size $n$ of the family the gap between the number of \emph{runs} yielded by this algorithm in $S_n$ and the minimum number of \emph{runs} into which $S_n$ can be partitioned is $\Theta(n)$.

  A sequence $S$ of $n$ points can be partitioned into a minimum number of \emph{runs} in time within $O(n^2)$: suffices to compute for each point $p$ in $S$ the largest \emph{run} starting at $p$.
  The running time of the partitioning algorithm must be within $O(n\log{n})$ but adaptive to the same parameter as the merging in order to 
obtain an adaptive algorithm computing the {\sc{Delaunay triangulation}} in time within $o(n\log n)$ on some classes of instances.
\end{SHORT}

\begin{LONG}
We describe a family of sequences $\mathcal{S}$, such that for all positive integer $n$, 
there is a sequence $S_n$ of $n$ points in $\mathcal{S}$ where the greedy partitioning algorithm yields three D-linear 
runs when $S_n$ can be optimally partitioned into just two D-linear runs. 

The {\sc{Delaunay triangulation}} of $m\ge 2$ points on the parabola $y=x^2$ is such that the leftmost point is adjacent to all the other $m-1$ points, and the left-to-right order is a D-linear run. The proof of the following lemma is based on this fact.

\begin{lemma}\label{lem:2-3}
For all positive integers $n$, there exists a sequence of $n$ planar points where the greedy partitioning algorithm yields three D-linear runs, whereas the optimal partition of this sequence has only two D-linear runs. 
\end{lemma}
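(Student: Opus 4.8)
The plan is to exhibit, for every sufficiently large $n$, an explicit sequence $S_n=S^{(1)}\cdot S^{(2)}$ obtained by concatenating two sequences that are each D-linear when processed in input order; this at once exhibits a partition of $S_n$ into two D-linear runs. The point of the construction is to make the greedy algorithm \emph{overshoot} on its first run: started at position $1$, it will consume all of $S^{(1)}$ and then a short ``screening'' prefix of $S^{(2)}$ --- those points being cheap to insert given the structure built so far --- and only afterwards hit a violation. Because the screening prefix has been absorbed into the first run, the suffix of $S^{(2)}$ handed to the second run is no longer D-linear, so greedy is driven into a second and then a third run. Every D-linear piece of the construction will be an arc of the parabola $y=x^2$ read left to right, for which the quoted fact guarantees both D-linearity and that the leftmost point of the arc is Delaunay-adjacent to all the others.

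Concretely, I would let $S^{(1)}$ be a left-to-right arc of a little more than $k$ points of $y=x^2$ (so that a point inserted to their left becomes adjacent to more than $k$ of them), and I would build $S^{(2)}$ from four blocks, in this input order: a tiny triangle $T$ of three points strictly enclosing a distinguished point $z$; a ``pivot'' point $p$ far to the left on the parabola; a long left-to-right arc $F$ of $\Theta(n)$ points of $y=x^2$; and finally $z$ itself, placed so that, with $T$ absent, $z$ is Delaunay-adjacent to every point of $F$. I would then argue in four steps. (i) \emph{Two runs suffice.} $S^{(1)}$ is D-linear by the quoted fact; $S^{(2)}$ is D-linear because $T$ is built in $O(1)$ time, $p$ is inserted in $O(1)$ from the $O(1)$-size structure preceding it, each point of $F$ is a new rightmost of a parabolic fan rooted at $p$ and hence costs $O(1)$, and $z$, lying strictly inside $T$, is Delaunay-adjacent only to the three vertices of $T$, so its insertion costs $O(1)$ both in point location and in flips. (ii) \emph{Two runs are necessary.} Once $S^{(1)}$ and $T$ have been processed in order, inserting $p$ makes it the new leftmost over more than $k$ points of $y=x^2$, forcing more than $k$ flips; $S_n$ is therefore not D-linear. (iii) \emph{Greedy produces three runs.} Its first run processes $S^{(1)}$, then the three vertices of $T$ (cheap because $T$ sits off to one side of $S^{(1)}$), then halts at $p$ by step (ii); so the first run is $S^{(1)}\cdot T$ and the second run starts at $p$. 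The second run processes $p$ and all of $F$ ($O(1)$ per point as in step (i)), then reaches $z$; but the screen $T$ is now gone, so $z$ is Delaunay-adjacent to all $\Theta(n)$ points of $F$ and its insertion needs $\Theta(n)>k$ flips --- a violation. The second run stops there, and the third run is the lone point $z$. (iv) \emph{All $n$.} The size of $F$ is a free parameter, so taking $|F|=n-|S^{(1)}|-5$ realizes every large enough $n$; the finitely many small values of $n$ are handled by trivial ad hoc sequences.

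The main obstacle I anticipate is controlling \emph{all} of the ``cheap'' insertions simultaneously --- each must be $O(1)$ in the point-location walk \emph{and} in the number of flips --- while still forcing the two ``blocking'' phenomena on which the argument rests. On a single parabola the insertion cost behaves too monotonically for both to occur, so $T$ and $z$ must lie off the curve, and it is delicate to guarantee at once that (a) $z$ is Delaunay-adjacent to the whole arc $F$ when $T$ is absent, yet is cheaply locatable from its predecessor in $S^{(2)}$; (b) $z$ lies strictly inside $T$ in \emph{every} intermediate sub-triangulation that arises during the runs; and (c) the vertices of $T$ are cheap enough for the greedy first run to swallow them. Establishing these reduces to determining the Delaunay triangulation of each relevant prefix of $S^{(1)}\cdot S^{(2)}$ and of $S^{(2)}$, for which the quoted parabola fact and small-perturbation arguments are the workhorses; everything else is bookkeeping.
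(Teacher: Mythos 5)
Your proposal rests on the same core idea as the paper's proof --- a small ``shielding'' triangle that greedy swallows into its first run, leaving the shielded point exposed in a later run --- but realizes it with different gadgets. The paper orders the input as $\langle B, A, A'\rangle$: $B$ is a D-linear strip of $k{+}1$ decreasing-area triangles ($k{+}3$ vertices) whose last triangle $\Delta uvw$ is the shield; $A$ is a parabola arc of $m>k$ points, with $p_1$ placed inside the first (largest) triangle of the strip; and $A'$ is a scaled copy of $A$ tucked inside $\Delta uvw$. The first greedy break is forced by point-location cost (walking through the $k$ triangles of $B$ to reach $p_1$); the second by flip cost (with $\Delta uvw$ absent, the first point of $A'$ is adjacent to all $m>k$ points of $A$). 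The optimal partition is $\langle B\setminus\{u,v,w\}\rangle$ and $\langle u,v,w,A,A'\rangle$. Your variant replaces the strip $B$ by a plain parabola arc $S^{(1)}$ and adds a separate pivot $p$, so the first greedy break is also by flip cost; and you make the shield $T$ a freestanding block between $S^{(1)}$ and $p$, which creates the extra obligation --- one the paper avoids by building $\Delta uvw$ into the D-linear strip --- of proving that greedy actually finds $T$ cheap and absorbs it.

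The obstacle you flag as (a) is, as far as I can tell, a genuine gap and not mere bookkeeping. In the optimal run $\langle T,p,F,z\rangle$, the Delaunay triangulation of $\{p\}\cup F$ is a fan with apex $p$. For $z$ (with $T$ removed) to be adjacent to \emph{all} of $F$, $z$ must sit near the apex side of that fan; but then the straight-line walk that locates $z$, starting from the last (rightmost) point of $F$, crosses $\Theta(|F|)>k$ wedge triangles, so run~2 is \emph{not} D-linear. Placing $T$ and $z$ near the right end of $F$ instead repairs the walk but destroys the property that $z$ is adjacent to all of $F$ without $T$. The paper's own sketch is terse about the analogous walk from $p_m$ to $p'_1$, but there the shield and the shielded points are confined to the circumcircle of $p_1p_2p_3$ and the strip $B$ is built around them specifically to control both the overshoot and the geometry; your freer placement of $T$, $p$, $F$, $z$ has to resolve this tension explicitly before your (a)--(c) can close. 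As written, the proposal is a plausible and genuinely different route, but with this geometric hole it is not yet a proof.
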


\begin{proof}
Let $k$ be the threshold used in the definition of D-linear sequences. Let $A$ be a set of $m>k$ points  in the parabola $y=x^2$, 
denoted $p_1,\ldots,p_m$ from left to right. 
Let $C$ be the circumcircle of the triangle $p_1,p_2,p_3$.
Let $u,v,w$ be three points such that the triangle $\Delta uvw$ with vertex set $\{u,v,w\}$
is small enough with respect to the convex hull of $A$, and located inside $C$.
Let $A'=\{p'_1,\ldots,p'_m\}$ be a scaled copy of $A$, located inside $\Delta uvw$.
Let $B$ be the $k+3$ vertex set of $k+1$ decreasing area and disjoint triangles such as every two consecutive triangles
share an edge (see Fig.~\ref{fig:counter}). 
Suppose that $B$ is ordered such that the sequence of points is a D-linear run.
The points $u,v$ and $w$ are the last three points of $B$. The triangle formed by the first three points of $B$ contains the point $p_1$ of $A$.
Let $P=A \cup A' \cup B$ be ordered as $\langle B,p_1,\ldots, p_m,p'_1,\ldots, p'_m \rangle$.
Applying the greedy partitioning algorithm to $P$ gives the three D-linear runs $\langle B \rangle$, $\langle p_1,\ldots, p_m \rangle$, 
and $\langle p'_1,\ldots, p'_m \rangle$ since: (1) adding $p_1$ to the {\sc{Delaunay triangulation}} of $B$ requires
traversing $k$ triangles to locate $p_1$; and (2) adding $p'_1$ to the {\sc{Delaunay triangulation}} of 
$A$ requires creating $m$ triangles (i.e.\ $p'_1$ is adjacent to every point in 
$A$). However, $\langle B\setminus \{u,v,w\} \rangle$ is a D-linear run and $\langle u,v,w,p_1,\ldots, p_m,p'_1,\ldots, p'_m \rangle$ is another one. Note that the triangle $\Delta uvw$ blocks the points in $A'$ of being adjacent 
to any point in $A$. \qed
\end{proof}

\begin{figure}[t!]
\centering
\includegraphics[scale=0.9]{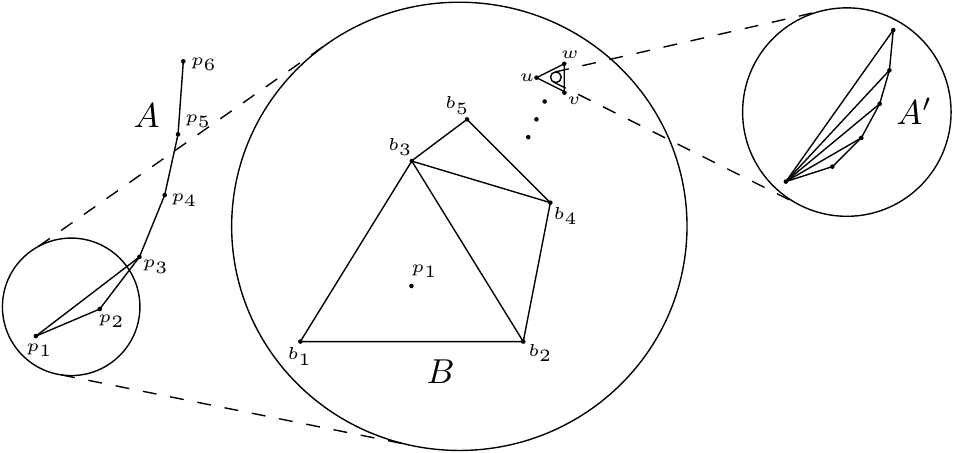}
\caption{The sequences $A$, $B$ and $A'$.}
\label{fig:counter}
\end{figure}

This family of sequences $\mathcal{S}$ can be generalized to show a gap of $\Theta(n)$ between the number of D-linear runs yielded by the greedy partitioning algorithm in $S_n$ and the minimum number of D-linear runs in which $S_n$ can be partitioned.
\end{LONG}

\begin{LONG}
\begin{lemma}\label{lem:greedy}
For all positive integer $n$, there exists a sequence of $n$ planar points where the greedy partitioning 
algorithm yields $\rho \in \Theta(n)$ D-linear runs, and the optimal partition of this sequence has only 2 D-linear runs.
\end{lemma}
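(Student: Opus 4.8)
\noindent\emph{Proof idea.} The plan is to stack $\Theta(n)$ copies of the gadget of Lemma~\ref{lem:2-3}, one per ``level'': each level forces the greedy partitioning one extra run while the optimal partition stays at two runs. Fix the threshold $k$ used in the definition of D-linear and a constant $m>k$, and build, for $t=\Theta(n)$, a sequence of $\Theta(t)$ points made of (a) a \emph{decoy staircase} $B$ --- a fan of $\Theta(k)$ pairwise disjoint triangles of strictly decreasing area, consecutive ones sharing an edge, so that a point inside its first triangle can be located from its last triangle only by a walk crossing all $\Theta(k)$ of them; (b) a chain of shrinking triangles $\Delta_1,\dots,\Delta_{t-1}$ prolonging $B$, each $\Delta_i$ sharing an edge with $\Delta_{i-1}$ and of far smaller area; and (c) parabolic arcs $A_1,\dots,A_t$ of $m$ points each, $A_1$ of fixed size with its leftmost point inside the first triangle of $B$, and each $A_{i+1}$ a tiny scaled copy of $A_1$ inside $\Delta_i$, positioned so that $\Delta_i$ lies deep inside a constant fraction of the circumcircles of the Delaunay fan of $A_i$. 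The input order is $B$, then $\Delta_1,\dots,\Delta_{t-1}$, then $A_1,\dots,A_t$, each arc read along its concave curve.

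For greedy: the decoy staircase and the whole shield chain are inserted with $O(1)$ operations per point and stay, with $A_1$, in the first run; but the leftmost point of $A_2$ must then be inserted into a Delaunay triangulation containing $A_1$ in which $\Delta_1$ has already been spent, and since it lies in $\Omega(m)$ circumcircles of the Delaunay fan of $A_1$ it creates $\Omega(m)>k$ triangles and opens a new run. The same happens at every arc $A_{i+1}$ (whose leftmost point enters a triangulation holding $A_i$ but not $\Delta_i$), so greedy produces $t-O(1)=\Theta(n)$ runs.

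For the optimum: split right after $B$. The first run is $B$ (D-linear). In the second run the shields $\Delta_1,\dots,\Delta_{t-1}$ are reinserted \emph{before} any arc, so when $A_{i+1}$ comes its leftmost point falls into a constant-size sub-face of the already-present $\Delta_i$-region, touching only $O(1)$ neighbours, and the remaining points of $A_{i+1}$ re-triangulate only a bounded cavity around the residual cluster $\Delta_{i+1}\cup\cdots\cup\Delta_{t-1}$, the coarser arcs being invisible at that scale; $A_1$ behaves the same way against the whole chain. Hence the second run is D-linear, the optimum is at most two runs, and it is at least two because the whole sequence is not D-linear (inserting the leftmost point of $A_2$ after $B,\Delta_1,\dots,\Delta_{t-1},A_1$ already costs $\Omega(m)>k$, the first break in greedy's long run).

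The delicate part --- which I expect to be the main obstacle --- is to exhibit by induction on the level the scale parameters $\varepsilon_1\gg\varepsilon_2\gg\cdots$ and the positions that simultaneously make (i) each shield $\Delta_i$ lie inside a constant fraction of the Delaunay circumcircles of $A_i$ (so the break occurs exactly when $\Delta_i$ is absent), (ii) every suffix of the shield chain present a bounded-complexity boundary toward the arc being inserted \emph{and} a Delaunay triangulation shallow enough that the leftmost point of the next arc is located in $O(1)$ steps (so each optimal-run insertion is $O(1)$, not growing with $t$), and (iii) the staircase together with the chain be D-linear in the input order while the leftmost point of $A_1$ remains reachable only across the whole staircase. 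Each of (i)--(iii) is straightforward in isolation from the empty-circle characterisation of Delaunay triangulations; making the inward-spiralling shield chain satisfy the two competing demands of (ii) at once is the step that needs care.
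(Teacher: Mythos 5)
The stacking strategy you propose has a structural flaw that the paper avoids with a different construction. You nest the shield triangles $\Delta_1 \supset \Delta_2 \supset \cdots \supset \Delta_{t-1}$ to depth $\Theta(n)$ and then place them all \emph{before} the arcs in the input order. In your candidate optimal partition (cut right after $B$), the second run begins by inserting $\Delta_1,\dots,\Delta_{t-1}$ from scratch, and only afterwards inserts $A_1$. But $A_1$ lives at the outermost scale while $\Delta_{t-1}$, the last point inserted, lives at the innermost; locating the leftmost point of $A_1$ by walking from the last inserted edge forces the walk to climb back up through the entire nested shield chain, i.e.\ $\Theta(n)$ triangles. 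That single location step already exceeds the constant threshold $k$, so your second run is \emph{not} D-linear, and the optimum for your sequence is not $2$. Worse, this isn't fixable by cutting elsewhere: under your shields-then-arcs order, any run containing $\Delta_i$ (needed to shield $A_{i+1}$) must also contain all the deeper shields $\Delta_{i+1},\dots,\Delta_{t-1}$ and the earlier arcs $A_1,\dots,A_i$, so the same long-walk obstruction recurs. Your greedy analysis also asserts that $B$, all shields, and $A_1$ stay in the first run with $O(1)$ cost per point --- but the very same walk cost that you need for the cut is already incurred there, so that claim contradicts itself. (The phrase ``$\Delta_1$ has already been spent'' also suggests a misconception: the incremental builder restarts from the empty triangulation at every new run, so a shield inserted in an earlier run simply isn't present later, whereas a shield inserted earlier in the \emph{same} run absolutely is.)

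The paper's construction sidesteps all of this by using a \emph{single} shield $\{u,v,w\}$ placed at the boundary between the two runs (as the last three points of $B$), and a \emph{single} region of ``bad'' points $A'=\langle p'_1,\dots,p'_l\rangle$, all crammed into one tiny disk inside the circumcircle $C$ of $p_1p_2p_3$ alongside the shield. The input order interleaves blocks of $k{+}1$ parabola points with single $A'$ points: $\langle B,p_1,\dots,p_{k+1},p'_1,p_{k+2},\dots,p_{2k+3},p'_2,\dots\rangle$. Because everything relevant ($A'$ and the shield) is co-located, every walk in the optimal second run is $O(1)$, and the forced greedy cuts come only from flip counts, never from walk length: each $p'_{j+1}$ sees the whole current parabola block ($>k$ flips) if and only if the shield $uvw$ is absent from the current run. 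Thus greedy is cut $\Theta(n/k)=\Theta(n)$ times, while putting $u,v,w$ at the head of the second run keeps its whole remainder D-linear. The moral is that a hierarchical stack of shields cannot satisfy the local-walk requirement of D-linearity simultaneously at all scales; the paper's flat interleaving with a single shield is essential, not a cosmetic difference.
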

\end{LONG}

\begin{LONG}
\begin{proof}
Let $A$ be a set of $m > k$ consecutive points in the parabola $y=x^2$ denoted 
$p_1,\ldots,p_m$ from left to right.
Let $C$ be the circumcircle of the triangle $p_1,p_2,p_3$.
Let $u,v,w$ be three points such that the triangle $\Delta uvw$ with vertex set $\{u,v,w\}$
is small enough with respect to the convex hull of $A$, and located inside $C$.
Let $A'$ be a D-linear run sequence of $l$ points, of increasing \emph{y-coordinate} and decreasing \emph{x-coordinate}, 
inside $C$ denoted $p'_1,\ldots,p'_l$. During the incremental construction 
of the {\sc{Delaunay triangulation}} of the sequence $AA'$ the first point of $A'$ forces a flipping of $m$
edges in $A$ (i.e.\ $m$ new edges are created connecting $p'_1$ with all points in $A$). Every 
next point of $A'$ forces a flipping of the $m$ edges connecting points in $A$ with points in $A'$.
Let $q'_1$ and $q'_2$ be two consecutive points from the sequence $A'$ and let $q_1,\dots,q_{k+1}$ be 
$k+1$ consecutive points from $A$. The sequence $Q=\langle q'_1,q_1,\ldots, q_{k+1},q'_2 \rangle$ is not a D-linear run because 
$q'_2$ forces a flipping of the $k+1$ edges connecting $q'_1$ with all the points in $q_1,q_2,\ldots,q_{k+1}$.
%
Let $B$ be the ($k+3$) vertex sequence of $k+1$ decreasing area and disjoint triangles so that every two consecutive triangles
share an edge. The points $u,v$ and $w$ are the last three points of $B$. The triangle formed by the first three points of $B$ contains inside the point $p_1$ of $A$.
$B$ is ordered such that the sequence of vertices is a D-linear run.
Let $P=A \cup A' \cup B$ be ordered as $\langle B,p_1,\ldots, p_{k+1},p'_1,p_{k+2},\ldots, p_{2k+3},p'_2,\ldots, p_m \rangle$, 
that is, first the points of $B$, then alternate $k+1$ points of $A$ with 1 point of $A'$,
resulting subsequences similar to $Q$.
The greedy partition algorithm divides $P$ into $\min(l, \frac{m}{k+1}) + 1$ D-linear runs: $\langle B \rangle$, 
$\langle p_1,\ldots, p_{k+1} \rangle$ and every 
subsequence starting with 1 point from $A'$ followed by $k+1$ points from $A$. Since: (i) adding $p_1$ to the {\sc{Delaunay 
triangulation}} of $B$ requires traversing $k$ triangles to locate $p_1$; (ii) adding $p'_1$ to the {\sc{Delaunay triangulation}} of 
$p_1\ldots p_{k+1}$ requires creating $k$ triangles (i.e.\ $p'_1$ is adjacent to every point in $p_1\ldots p_{k+1}$); 
and (iii) adding $p'_{i+1}$ to $\langle p'_i,p_{ik+j},\ldots, p_{(i+1)k+j+1} \rangle$ produces a sequence similar to $Q$, and hence
is not a D-linear run. However, 
$\langle B\setminus \{u,v,w\} \rangle$ is a D-linear run and 
$\langle u,v,w,p_1,\ldots, p_k,p'_1,p_{k+1}, p_{k+2},\ldots,p_{2k},p'_2,\ldots, p_m \rangle$ 
is another one. Note that the triangle $\Delta uvw$ blocks the points in $A'$ of being adjacent to any vertex in $A$.
It is possible to build a sequence $l=\frac{m}{k+1}=\frac{n-k+3}{k+2}$ (1 point in $A'$ for each sequence of 
$k+1$ points in $A$) such that the number of D-linear runs yielded by the greedy partitioning algorithm is $\frac{n-k+3}{k+2} + 1$
and the optimal partition has only 2 D-linear runs. \qed
\end{proof}
\end{LONG}

\begin{INUTILE}
of respective sizes $n_1,\ldots,n_\rho$ summing to $n$ such that 
${\cal H}(n_1,\ldots,n_\rho)$ is within a constant additive term of the minimal entropy among partitions in runs.
\end{INUTILE}

\begin{SHORT}
Since the greedy partitioning algorithm does not yield an optimal partition, 
we explore more sophisticated partition techniques and show that they suffer similar problems.
\end{SHORT}

\begin{LONG}
\subsection{Other Partitioning Schemes}
\label{sec:partitioning}

Since the greedy partitioning algorithm does not yield an optimal partition, 
we explore more sophisticated partition techniques and show that they suffer similar problems.

\subsubsection{Test and Divide.}
\end{LONG}

We adapt the Levcopoulos et al.~\cite{2002-SWAT-AdaptiveAlgorithmsForConstructingConvexHullsAndTriangulationsOfPolygonalChains-LevcopoulosLingasMitchell} partition technique to cut a sequence of planar points\begin{LONG} into D-linear runs\end{LONG}.
\begin{LONG}
The \emph{test and divide} partitioning is another partition technique based on the incremental algorithm $G$ for the computation 
of the {\sc{Delaunay triangulation}} (seen in Sect.~\ref{sec:incremental}).\end{LONG}
Given a sequence $S = \langle p_1,\dots, p_n \rangle$ of $n$ planar points, the \emph{test and divide} partitioning first tests
whether the sequence $S$ is a \begin{LONG}\emph{D-linear}\end{LONG} \emph{run}. In such a case, it identifies the sequence $S$ as a \begin{LONG}\emph{D-linear}\end{LONG} \emph{run}. If not, it partitions the
sequence $S$ into $S' = \langle p_1, \dots, p_{\lfloor n/2 \rfloor} \rangle$ and $S'' = \langle p_{\lfloor n/2 \rfloor + 1}, \dots, p_n \rangle$ and 
recursively the same procedure is applied to $S'$ and $S''$. While the greedy partitioning has a linear time complexity, this
partitioning has a time complexity within $O(n\log{\rho}) \subseteq O(n\log{n})$ where $\rho$ is the number of \begin{LONG}\emph{D-linear}\end{LONG} \emph{runs} 
identified by the partitioning process.
\begin{INUTILE}
Given a sequence $S = \langle p_1,\dots, p_n \rangle$ of $n$ planar points the \emph{test-and-divide} partitioning first tests
whether the sequence $S$ is a \emph{run}. In such a case, it identifies the sequence $S$ as a \emph{run}. If not, it partitions the
sequence $S$ into $S' = \langle p_1, \dots, p_{\lfloor n/2 \rfloor} \rangle$ and $S'' = \langle p_{\lfloor n/2 \rfloor + 1}, \dots, p_n \rangle$ and 
recursively the same procedure is applied to $S'$ and $S''$. While the greedy partitioning has a linear time complexity, this
partitioning has a time complexity within $O(n\log{\rho}) \subseteq O(n\log{n})$ where $\rho$ is the number of \emph{runs} 
identified by the partitioning process.
The \emph{test-and-divide} partitioning suffers the same problems as the greedy partitioning.
\end{INUTILE}

\begin{INUTILE}
There exists a sequence $S$ of $n$ points where the \emph{test and divide} partitioning yields a number of D-linear runs linear on $n$,
and $S$ can be partitioned optimally in just 2 D-linear runs.
\end{INUTILE}

\begin{LONG}
The following lemma shows that the \emph{test and divide} partitioning suffers the same problems as the greedy partitioning.

\begin{lemma}\label{lem:bisec}
For all positive integer $n$, there exists a sequence of $n$ planar points where the test and divide partitioning 
yields $\rho \in \Theta(n)$ D-linear runs and the optimal partition of this sequence has only 2 D-linear runs. 
\end{lemma}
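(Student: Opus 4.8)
The plan is to reuse, with one geometric strengthening, the family of instances built in the proof of Lemma~\ref{lem:greedy}, and to check that the \emph{test and divide} partitioning behaves just as badly on it as the greedy partitioning does. Recall the ingredients there: a set $A$ of $m>k$ consecutive points on the parabola $y=x^2$ (whose left-to-right order is a D-linear run), a tiny ``shield'' triangle $\Delta uvw$ together with a set $A'$ of $\Theta(n)$ points placed inside it, and a constant-size sequence $B$ of chained triangles whose last three vertices are $u,v,w$ and whose first (largest) triangle contains $p_1$, tuned so that locating $p_1$ inside $B$ takes more than $k$ operations. The instance $S$ consists of $B$ followed by the interleaved sequence formed by blocks of $k{+}1$ points of $A$ separated by single points of $A'$; thus $S$ contains $\Theta(n)$ points of $A'$, and a few extra parabola points in the last block let us reach any value of $n$. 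The one modification I would make is to place $A'$ (and $\Delta uvw$) in the region lying inside the circumcircle of \emph{every} fan triangle of $DT(A)$ --- for instance by taking $A$ to be a tight cluster on the parabola, so that all those circumcircles are close to the osculating circle and their common interior is a fixed region above the cluster. With this placement, inserting a point of $A'$ right after any $k{+}1$ consecutive points of $A$, with no shield present, makes it adjacent to all of them and hence costs more than $k$ operations, \emph{regardless of which} $k{+}1$ consecutive points of $A$ are involved.

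The optimality half is the argument of Lemma~\ref{lem:greedy} almost unchanged. The sequence $S$ is not D-linear: its prefix up to and including $p_1$ already forces more than $k$ operations (locating $p_1$ inside the chained triangles of $B$), so that prefix --- and hence $S$ --- cannot be a single D-linear run. On the other hand $S=L\cdot R$ with $L=\langle B\setminus\{u,v,w\}\rangle$ (constant size, therefore D-linear) and $R$ equal to $\langle u,v,w\rangle$ followed by the entire interleaved part. The sequence $R$ is D-linear because inserting $u,v,w$ first builds $\Delta uvw$, and since $\Delta uvw$ is tiny and contains all of $A'$, every later insertion of a point of $A'$ stays confined to $\Delta uvw$ and never touches a point of $A$, while the points of $A$ are inserted left to right and remain cheap. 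Two D-linear runs therefore suffice, and combined with ``$S$ is not D-linear'' this shows that the optimal partition has exactly two D-linear runs.

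It remains to show that the \emph{test and divide} partitioning produces $\rho\in\Theta(n)$ runs on $S$. It is enough to consider the right child of the top-level call, the subsequence $R'=\langle p_{\lfloor n/2\rfloor+1},\dots,p_n\rangle$: $R'$ is a suffix of the interleaved part, it contains no point of $B$ (so no shield), and it has $\Theta(n)$ points, hence $\Omega(n/k)$ complete blocks of $k{+}1$ points of $A$ followed by one point of $A'$. Built on its own, $R'$ pays more than $k$ operations on the point of $A'$ ending its first complete block, so $R'$ is not D-linear; the very same argument applies to every sub-interval produced by further bisection of $R'$, because any such sub-interval of size at least $2(k{+}1)+1$ still contains a complete block no matter where it starts. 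Consequently the recursion on $R'$ never halts until the sub-intervals have size $O(k)=O(1)$, and its leaves --- all of size $O(k)$ --- partition $R'$, so there are $\Theta(n/k)=\Theta(n)$ of them. Together with the at least one run produced by the left child, this gives a total of $\Theta(n)$ runs, as claimed.

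The step I expect to be the main obstacle is the verification, in the last paragraph, that an \emph{arbitrary} dyadic sub-interval of the interleaved part triggers the non-D-linear behaviour even though it may begin or end in the middle of a block, so that the points of $A$ and $A'$ it sees are misaligned with the intended block structure. This is precisely the reason for the modified placement: putting $A'$ in the common interior of the circumcircles of \emph{all} fan triangles of $DT(A)$, rather than inside a single distinguished circumcircle as is enough for the greedy argument of Lemma~\ref{lem:greedy}, makes the ``fan plus new apex'' flip count exceed $k$ for every choice of $k{+}1$ consecutive points of $A$. One still has to check the symmetric situation at the right end of a sub-interval, where only a prefix of a block is visible, and confirm that the constant-size base cases of the recursion are indeed D-linear; if the alignment bookkeeping becomes unwieldy, a safe fallback is to widen each block to $2(k{+}1)$ points of $A$, so that any sub-interval long enough to be split is guaranteed to contain $k{+}1$ consecutive points of $A$ immediately followed by a point of $A'$.
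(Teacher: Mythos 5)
Your proposal follows essentially the same route as the paper: take the family from Lemma~\ref{lem:greedy}, observe that the interleaved $A/A'$ part keeps failing the D-linearity test under bisection until the sub-intervals have constant size, hence $\Theta(n)$ leaves, while the optimality of a two-run partition is inherited verbatim. The one place where the paper differs is a small but slick trick you did not use: the paper enlarges $B$ to $\lfloor n/2 \rfloor$ points so that the very first bisection of the test-and-divide recursion falls exactly on the boundary between $B$ and the interleaved part, after which the left child is immediately recognized as a D-linear run and the right child is purely the interleaved part. Your alternative --- keeping $|B|$ constant and ignoring the left child, arguing only that the right child $R'$ already forces $\Theta(n)$ leaves --- achieves the same effect, since $\rho \in \Omega(n)$ from $R'$ alone and $\rho \le n$ trivially; it just gives up the clean ``exactly $\frac{n}{2(k+2)}$'' count the paper states. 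You are also more careful than the paper on a genuine subtlety the paper glosses over: that an arbitrary dyadic sub-interval of the interleaved part may be misaligned with the $[k{+}1 \text{ points of } A][\text{one point of }A']$ block structure, and one must argue the insertion of an $A'$ point still exceeds $k$ operations regardless of which consecutive $A$ points precede it in the sub-instance. Your proposed geometric strengthening (placing $A'$ in the common interior of all fan-triangle circumcircles, or widening blocks to $2(k{+}1)$ points) is a reasonable fix, and worth noting explicitly since the published proof simply asserts the leaf count without addressing alignment.
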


\begin{proof}
Consider the sequence $S$ used in the proof of Lemma~\ref{lem:greedy} where the number of points in $B$ is changed to $\lfloor n/2 \rfloor$ instead of $k+3$ then the \emph{test and divide} partitioning
will cut $S$ in the last point of $B$. $B$ is a D-linear run, but the bisection partitioning will cut the rest of $S$ in $\frac{n}{2(k+2)}$
D-linear runs. \qed
\end{proof}

\subsubsection{Amortized Greedy Partitioning.}
\end{LONG}

The greedy partitioning algorithm adds the point $p_i$ to the current \emph{run} whether the number of location and update operations
of the incremental algorithm $G$ in $p_i$ is at most a threshold $k$. But it is possible that in some points the 
number $r$ of operations would be much lower than $k$. The \emph{amortized greedy} partitioning algorithm makes use of the $k-r$ remaining operations in the subsequent points, similar to the accounting method for
amortized analysis. The number $k-r$ of remaining operations are credited to be used later, so that the point $p_i$ will be added to the current run if the number of navigation and update operations of $G$ in $p_i$ is less than $k$ plus this credit.
\begin{SHORT}
This partitioning algorithm suffers the same problems mentioned above. 
\end{SHORT}

\begin{LONG}
We use this algorithm to give another definition of ``easy'' sequences of points for the computation of the {\sc{Delaunay triangulation}}:

\begin{definition}[Amortized D-linear]
Consider a sequence $S = \langle p_1, \ldots, p_n \rangle$ of $n$ distinct planar points. Given an integer value $k>0$, $S$ is \emph{Amortized $k$-D-linear} if for each point $p_i \in S$, the incremental algorithm $G$ performs at most $k*i$ navigation and flip operations while computing the {\sc{Delaunay triangulation}} of the sequence $p_1, \dots, p_i$.  If $S$ is Amortized $k$-D-linear and $k\in O(1)$ is a constant independent of $n$, we say that $S$ is 
\emph{Amortized D-linear}.
\end{definition}

Such definition yields a partitioning of the input sequence into subsequence of consecutive positions:

\begin{definition}[Amortized D-linear Run]
An \emph{Amortized D-linear run} in a sequence $S$ of points is an \emph{Amortized D-linear} subsequence formed by consecutive points in $S$.
\end{definition}

We define the optimal partition into Amortized D-Linear runs as the partition with the minimum number of Amortized D-linear runs.
\end{LONG}

\begin{LONG}
\begin{lemma}
\label{lem:amortized}
For all positive integer $n$, there exists a sequence of $n$ planar points where the amortized greedy partitioning yields $\rho \in \Theta(n)$ Amortized D-linear runs and the optimal partition has only 2 Amortized D-linear runs.
\end{lemma}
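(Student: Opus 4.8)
The plan is to build the hard instance on top of the construction used in the proof of Lemma~\ref{lem:greedy}. Keep the parabola arc $A=\langle p_1,\dots,p_m\rangle$, the scaled copy $A'=\langle p'_1,\dots,p'_l\rangle$ placed inside a small triangle $\Delta uvw$, and the same interleaving (blocks of $k{+}1$ points of $A$ separated by single points of $A'$), with $u,v,w$ the last three points of a prefix $B$. Two properties established in Lemma~\ref{lem:greedy} are reused: (i) in a run that contains $\Delta uvw$ built before the points of $A'$, every point of $A'$ is isolated by $\Delta uvw$ and costs $O(1)$; and (ii) in a run that does \emph{not} contain $\Delta uvw$, inserting the $i$-th point of $A'$ re-flips $\Theta(ik)$ edges, because $\Theta(ik)$ points of $A$ are already present when $p'_i$ is processed, so the cumulative cost of the first $j$ points of $A'$ in such a run is $\Theta(kj^2)$.

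First I would check that (ii) already defeats the \emph{amortized} greedy, \emph{provided} the shield $\Delta uvw$ never lands in the same run as the points of $A'$: in a run that has entered the interleaving without $\Delta uvw$, the cumulative cost after $j$ points of $A'$ is $\Theta(kj^2)$, while the credit banked so far is at most $k$ per processed point, hence $O(k^2j)$; the run must be cut after $O(k)=O(1)$ points of $A'$, i.e. after $O(1)$ points in total, and the same happens in every subsequent run, so the amortized greedy produces $\Theta(l)=\Theta(n)$ runs. The optimal partition is exactly the one of Lemma~\ref{lem:greedy}: cut $P$ into $\langle B\setminus\{u,v,w\}\rangle$ and $\langle u,v,w,p_1,\dots,p_{k+1},p'_1,\dots\rangle$; the first piece is Amortized D-linear because it is $B$ with its last three points removed, and the second is even (strongly) D-linear by (i), hence Amortized D-linear. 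So the optimal partition has exactly two runs, and with $l=\Theta(n)$ the gap is $\Theta(n)$.

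The genuinely new point — and the main obstacle — is forcing the amortized greedy's first run \emph{not} to keep $\Delta uvw$ when it enters the interleaving. For the plain greedy this is automatic: locating $p_1$ right after $B$ costs more than the per-point threshold $k$ (one walks through the chain of $k$ triangles of $B$), so the greedy cuts exactly at the end of $B$. The amortized greedy, however, banks $\Theta(k^2)$ unspent credit while processing the cheap run $B$, absorbs $p_1$ and all of $A,A'$ into one run, and, with $\Delta uvw$ present, finds $A'$ shielded by (i), collapsing everything into a single run (no gap). To block this I would rebuild $B$ as a \emph{zero-slack} Amortized $k$-D-linear run: keep the same final Delaunay triangulation (the chain of $k{+}1$ shrinking triangles, with $p_1$ inside the first triangle and $\Delta uvw$ last), but choose the insertion order of $B$'s points — still ending with $u,v,w$ — so that after a short cheap prefix that banks a little credit, the remaining points, deliberately located far apart in the current triangulation (possible since $B$ spans a chain of length $\Theta(k)$), each cost slightly more than $k$, draining the credit so that every prefix of $B$ of length $i$ has cumulative cost exactly $k i$. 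With $B$ ending at zero credit, locating $p_1$ (cost $>k$) forces the amortized greedy to cut precisely at the end of $B$, stranding $\Delta uvw$ in run~$1$ where it is inert, so that runs $2,3,\dots$ behave exactly as in the second paragraph.

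I expect the delicate part to be realizing all the constraints on $B$ at once: it must be Amortized $k$-D-linear with zero residual credit, have as its triangulation the chain that makes locating $p_1$ cost more than $k$, end with the three points $u,v,w$ whose triangle is a Delaunay triangle of the whole instance and isolates $A'$, and simultaneously be such that inserting $u,v,w$ last into the accumulated triangulation costs exactly the amounts that bring the cumulative cost of $B$ to $k|B|$. This amounts to choosing coordinates with enough degrees of freedom and verifying the relevant orientation and incircle predicates; the argument also tolerates the usual point-location conventions of $G$ (walking from a fixed reference edge, or from the last inserted point) by adjusting where $p_1$ and the last points of $B$ are placed. Once $B$ is in place, the run-by-run counting of the second paragraph, inherited from Lemma~\ref{lem:greedy}, finishes the proof.
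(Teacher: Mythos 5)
Your proposal is correct and follows essentially the same route as the paper's own argument: reuse the interleaved $A$, $A'$, $\Delta uvw$ construction of Lemma~\ref{lem:greedy}, note that the amortized variant would otherwise bank enough credit on $B$ to swallow the shield, and therefore re-engineer $B$ so that $G$ spends close to $k$ operations on almost every one of its points (your ``zero-slack'' $B$ is exactly the paper's ``credit is almost zero at the end of $B$''), which forces the cut at the end of $B$ and yields $\Theta(n/k^2)=\Theta(n)$ runs against an optimal partition of two. Your quadratic-cost-versus-linear-credit accounting for the interleaved tail and your explicit statement of the obstacle are more detailed than the paper's terse proof, but they instantiate the same construction and the same counting.
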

\end{LONG}

\begin{SHORT}
The combination of the \emph{test and divide} and the \emph{amortized greedy} also suffers the same problem. The constructions of the families of sequences that prove the non-optimality of these four different partitioning algorithms are deferred to the extended version~\cite{2015-ARXIV-RefiningTheAnalysisOfDivideAndConquer-BarbayOchoaPerez}.
\end{SHORT}

\begin{LONG}
\begin{proof}
The construction of the sequence follows the same scheme of previous constructions (see the proof of Lemma~\ref{lem:greedy}). 
The sequence $B$ is such that the incremental
algorithm $G$ performs $k$ navigation and flip operations in almost every point in $B$, so that $B$ is still a run but the credit is almost 
zero at the end of $B$. Again, the last 3 points of $B$ form a triangle that blocks the points in $A'$ of being adjacent to any vertex in $A$.
The sequence $S$ alternates points in $A$ with points in $A'$ such that each new point in $A'$ is adjacent to every point in $A$, thus
when the numbers of points in $A$ is large enough to exceed the credit, the Amortized D-linear run is cut. This partition algorithm
yields close to $n/k^2$ Amortized D-linear runs. This sequence can be partitioned optimally in just 2 Amortized D-linear runs. 
Note that the triangle $\Delta uvw$ blocks the points in $A'$ of being adjacent to any point in $A$. \qed
\end{proof}

\subsubsection{Amortized Test and Divide Partitioning.}

The \emph{amortized test and divide partitioning} is a combination of the \emph{test and divide} partitioning and the \emph{amortized greedy}
partitioning. Given a sequence $S = \langle p_1, \dots, p_n \rangle$ of $n$ planar points we first test
whether the sequence $S$ is Amortized D-linear. In such a case, we identify the sequence $S$ as an Amortized D-linear run. 
If not, we partition the sequence $S$ into $S' = \langle p_1, \dots, p_{\lfloor n/2 \rfloor} \rangle$ and 
$S'' = \langle p_{\lfloor n/2 \rfloor + 1}, \dots, p_n \rangle$ and recursively the same procedure is applied to $S'$ and $S''$.

\begin{lemma}
For all positive integer $n$, there exists a sequence of $n$ planar points where the amortized test and divide partitioning yields 
$\rho \in \Theta(n)$ Amortized D-linear runs and the optimal partition has only 2 Amortized D-linear runs.
\end{lemma}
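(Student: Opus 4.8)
The plan is to reuse the construction scheme of Lemma~\ref{lem:greedy} and Lemma~\ref{lem:amortized}, tuned so that the initial bisection performed by the \emph{amortized test and divide partitioning} lands exactly where we want it, exactly as in the proof of Lemma~\ref{lem:bisec}. First I would fix the threshold $k$ used in the definition of Amortized D-linear sequences and build three gadgets: a set $A=\{p_1,\dots,p_m\}$ of points on the parabola $y=x^2$ ordered left to right; a scaled copy $A'=\{p'_1,\dots,p'_l\}$ of a decreasing-$x$, increasing-$y$ D-linear run placed strictly inside the circumcircle $C$ of the triangle $p_1p_2p_3$; and a ``blocker'' chain $B$ of $k+1$ disjoint, decreasing-area triangles sharing consecutive edges, whose last three vertices are $u,v,w$ and whose first triangle contains $p_1$, with $B$ ordered so that its vertex sequence is a D-linear run whose incremental construction spends close to $k$ operations at almost every point, so the accumulated credit is near zero at the end of $B$. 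The only change relative to Lemma~\ref{lem:amortized} is that I enlarge $B$ to have $\lfloor n/2 \rfloor$ vertices, padding it with extra small decreasing-area triangles if needed, so that $|B|$ is exactly the bisection point.

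Next I would set the input order to be $\langle B, p_1,\dots,p_{k+1}, p'_1, p_{k+2},\dots,p_{2k+3}, p'_2,\dots,p_m \rangle$ --- first all of $B$, then alternating blocks of $k+1$ points of $A$ with a single point of $A'$ --- exactly the interleaving used in Lemma~\ref{lem:greedy}. The first step of the amortized test and divide partitioning tests whether the whole sequence is Amortized D-linear; it is not, because the first point of $A'$ is adjacent to all of the preceding points of $A$ and forces a linear number of flips with essentially no credit available. So it bisects, and since $|B|=\lfloor n/2 \rfloor$ the cut falls precisely at the end of $B$. On the left half $B$ the test succeeds (it is a D-linear run by construction), producing a single run. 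On the right half I recurse: each point $p'_i$ of $A'$, when inserted after a block of $k+1$ points of $A$, is adjacent to all points of $A$ inserted so far, so it triggers a number of flips that grows with $n$ and overwhelms any credit banked over the preceding $k+1$ cheap insertions; this is the same obstruction that prevents $Q=\langle q'_1,q_1,\dots,q_{k+1},q'_2 \rangle$ from being Amortized D-linear in the proof of Lemma~\ref{lem:amortized}. Hence every recursive subproblem that contains one of these transitions fails its test and must be bisected further, and unwinding the recursion yields $\rho\in\Theta(n/k^2)=\Theta(n)$ Amortized D-linear runs. For the optimal partition I would exhibit the two runs $\langle B\setminus\{u,v,w\} \rangle$ and $\langle u,v,w,p_1,\dots,p_{k+1},p'_1,p_{k+2},\dots,p'_2,\dots,p_m \rangle$; the key geometric fact, identical to the one used in Lemmas~\ref{lem:greedy} and~\ref{lem:amortized}, is that the triangle $\Delta uvw$ placed inside $C$ shields every point of $A$ from every point of $A'$ in the Delaunay triangulation, so once $u,v,w$ are present no $A'$ point is ever adjacent to an $A$ point, and the incremental construction of the second run spends only a constant amortized number of operations per point. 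Choosing $l=\Theta(m)=\Theta(n)$ then forces the gap between $\rho$ and $2$ to be $\Theta(n)$.

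The main obstacle I anticipate is the interaction between the credit mechanism and the recursive bisection: the amortized test is applied afresh to each subsequence, so the credit is reset at the start of every recursive call, and I must check that within each recursive subproblem that straddles an $A$/$A'$ transition the $k+1$ cheap $A$-insertions genuinely cannot bank enough credit to absorb the subsequent $\Theta(n)$-flip insertion of an $A'$ point. This requires making $m$ --- hence the number of $A$-points inserted before any given $p'_i$ --- grow fast enough relative to $k$, and verifying it uniformly across all levels of the recursion tree rather than merely at the top level. A secondary technical point is confirming that padding $B$ up to $\lfloor n/2 \rfloor$ vertices with extra decreasing-area triangles keeps $B$ a D-linear run with near-zero terminal credit and does not disturb the blocking role of $\Delta uvw$.
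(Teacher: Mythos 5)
Your proof is correct and takes essentially the same approach as the paper: the same three gadgets $A$ (points on the parabola), $A'$ (a scaled copy hidden inside the circumcircle $C$), and the blocker chain $B$ whose last triangle $\Delta uvw$ shields $A'$ from $A$; the same alternating $A$/$A'$ interleave; and the same credit-exhaustion argument. You are in fact more careful than the paper, whose proof of this lemma is verbatim identical to its proof of Lemma~\ref{lem:amortized} for the amortized \emph{greedy} scheme and never actually engages with the bisection structure of test-and-divide. The two points you add are both genuine omissions in the paper's argument: (i) you explicitly enlarge $B$ to $\lfloor n/2\rfloor$ vertices so the first cut falls at the end of $B$ — the alignment trick the paper does spell out for the non-amortized case in Lemma~\ref{lem:bisec} but silently drops here; and (ii) you flag that, because each recursive call re-runs the amortized test on a fresh subsequence with reset credit, one must check uniformly down the recursion tree that no subsequence straddling an $A$/$A'$ boundary can bank enough credit over its $k+1$ cheap $A$-insertions to absorb the subsequent $\Theta(m)$-flip insertion of an $A'$ point. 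For (ii) the argument goes through because the flip cost of $p'_i$ grows with the number of $A$-points already present, while the available credit in any subproblem is bounded by $k$ times its length before the first $A'$ transition; choosing $m$ and the block size appropriately makes the former dominate at every level. I recommend keeping both of these observations explicit in your final writeup, since the paper's version relies on them implicitly.
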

\end{LONG}

\begin{LONG}
\begin{proof}
The construction of the sequence follows the same scheme of previous constructions (see the proof of Lemma~\ref{lem:greedy}). 
The sequence $B$ is such that the incremental
algorithm $G$ performs $k$ navigation and flip operations in almost every point in $B$, so that $B$ is still a run but the credit is almost 
zero at the end of $B$. Again, the last 3 points of $B$ form a triangle that blocks the points in $A'$ of being adjacent to any vertex in $A$.
The sequence $S$ alternates points in $A$ with points in $A'$ such that each new point in $A'$ is adjacent to every point in $A$, thus
when the numbers of points in $A$ is large enough to exceed the credit, the Amortized D-linear run is cut. This partition algorithm
yields close to $n/k^2$ Amortized D-linear runs. This sequence can be partitioned optimally in just 2 Amortized D-linear runs. 
Note that the triangle $\Delta uvw$ blocks the points in $A'$ of being adjacent to any point in $A$. \qed
\end{proof}
\end{LONG}


\section{Discussion}
\label{sec:discussion}

We describe one technique to refine the analysis of the divide-and-conquer polynomial evaluation algorithm, and two techniques to refine the analysis of divide-and-conquer input-order adaptive algorithms for computing the {\sc{Convex Hull}}, {\sc{Voronoi Diagram}} and {\sc{Delaunay Triangulation}} of $n$ planar points in time within $O(n(1+\mathcal{H}(n_1, \dots, n_k))) \subseteq O(n(1{+}\log{k})) \subseteq O(n\log{n})$ for some parameters $n_1, \dots, n_k$ summing to $n$, where $\mathcal{H}(n_1, \dots, n_k) = \sum_{i=1}^k{\frac{n_i}{n}}\log{\frac{n}{n_i}}$ measures the ``difficulty'' of the instance. We show the optimality of the algorithms for computing {\sc{Convex Hulls}}, {\sc{Voronoi Diagrams}} and {\sc{Delaunay Triangulations}} by providing lower bounds that match the refined analyses. The later results improve on those from Levcopoulos et al.~\cite{2002-SWAT-AdaptiveAlgorithmsForConstructingConvexHullsAndTriangulationsOfPolygonalChains-LevcopoulosLingasMitchell} in that this analysis takes advantage of the sizes of the simple polygonal chains into which the input polygonal chain can be partitioned.

There is still work to be done in the directions initiated by our work. One of these techniques is based on the partitioning of a sequence of $n$ planar points into subsequences, whose corresponding {\sc{Voronoi diagrams}} and {\sc{Delaunay triangulations}} can be computed quickly, and a technique to efficiently merge those structures into a single one, inspired by Huffman~\cite{1952-IRE-AMethodForTheInstructionOfMinimumRedundancyCodes-Huffman} codes. Each combination of partition and merging techniques yields adaptive algorithms to the input order for computing \textsc{Voronoi Diagrams} and \textsc{Delaunay Triangulations}. Those results show that the \textsc{Delaunay Triangulation} can be computed in time within $o(n\log{n})$ on some classes of instances, yet we described various partitioning algorithms where the time complexity achieved is sub-optimal. The next step is to find better partitioning algorithms taking full advantage of the input order.

We classify the various refined analysis between those focusing on the structure of the instance versus those focusing on the order in which the input is given. We have been analyzing an algorithm for the \textsc{Union Set} problem that can be adapted for solving the \textsc{Sorting} problem taking advantage of the structure of the instance and the order of the input in synergy. We will continue studying potential synergistic solutions to the \textsc{Sorting} problem and their extension to the \textsc{Convex Hull} and \textsc{Delaunay triangulation} computation.


\bibliographystyle{splncs}
\bibliography{addedForThePaper,bibliographyDatabaseJeremyBarbay,publicationsJeremyBarbay}

\begin{thebibliography}{10}
\providecommand{\url}[1]{\texttt{#1}}
\providecommand{\urlprefix}{URL }

\bibitem{2009-FOCS-InstanceOptimalGeometricAlgorithms-AfshaniBarbayChan}
Afshani, P., Barbay, J., Chan, T.: Instance-optimal geometric algorithms. In:
  Proceedings 50th IEEE Symposium on Foundations of Computer Science (FOCS)
  (2009)

\bibitem{2013-TCS-OnCompressingPermutationsAndAdaptiveSorting-BarbayNavarro}
Barbay, J., Navarro, G.: On compressing permutations and adaptive sorting.
  Theoretical Computer Science (TCS)  513,  109--123 (2013)

\bibitem{1991-DCG-TriangulatingASimplePolygonInLinearTime-Chazelle}
Chazelle, B.: Triangulating a simple polygon in linear time. Discrete \&
  Computational Geometry (DCG)  6(5),  485--524 (Aug 1991)

\bibitem{1995-IJCGA-OnComputingVoronoiDiagramsForSortedPointSets-DjidjevLingas}
Djidjev, H., Lingas, A.: On computing voronoi diagrams for sorted point sets.
  International Journal of Computational Geometry \& Applications (IJCGA)
  5(3),  327--337 (1995)

\bibitem{1977-TCJ-ComputingDirichletTessellationsInThePlane-GreenSibson}
Green, P.J., Sibson, R.: Computing dirichlet tessellations in the plane. The
  Computer Journal (TCJ)  21(2),  168--173 (1978)

\bibitem{1985-ACMT-PrimitiveForTheManipulationOfGeneralSubdivisionAndTheComputationOfVoronoiDiagrams-GuibasStolfi}
Guibas, L., Stolfi, J.: Primitives for the manipulation of general subdivisions
  and the computation of voronoi. ACM Trans. Graph. (TOG)  4(2),  74--123 (Apr
  1985)

\bibitem{1952-IRE-AMethodForTheInstructionOfMinimumRedundancyCodes-Huffman}
Huffman, D.A.: A method for the construction of minimum-redundancy codes.
  Proceedings of the Institute of Radio Engineers (IRE)  40(9),  1098--1101
  (September 1952)

\bibitem{1979-FOCS-EfficientComputationOfContinuousSkeletons-Kirkpatrick}
Kirkpatrick, D.G.: Efficient computation of continuous skeletons. In:
  Proceedings of the 20th Annual Symposium on Foundations of Computer Science
  (FOCS). pp. 18--27. IEEE Computer Society, Washington, DC, USA (1979)

\bibitem{1986-JCom-TheUltimatePlanarConvexHullAlgorithm-KirkpatrickSeidel}
Kirkpatrick, D.G., Seidel, R.: The ultimate planar convex hull algorithm? SIAM
  Journal on Computing (SICOMP)  15(1),  287--299 (1986)

\bibitem{1973-BOOK-TheArtOfComputerProgrammingVol3-Knuth}
Knuth, D.E.: The Art of Computer Programming, Vol 3, chap. Sorting and
  Searching, Section 5.3. Addison-Wesley (1973)

\bibitem{2002-SWAT-AdaptiveAlgorithmsForConstructingConvexHullsAndTriangulationsOfPolygonalChains-LevcopoulosLingasMitchell}
Levcopoulos, C., Lingas, A., Mitchell, J.S.B.: Adaptive algorithms for
  constructing convex hulls and triangulations of polygonal chains. In:
  Proceedings of the 8th Scandinavian Workshop on Algorithm Theory (SWAT). pp.
  80--89. Springer-Verlag, London, UK (2002)

\bibitem{1987-IPL-OnLineConstructionOfTheConvexHullOfASimplePolyline-Melkman}
Melkman, A.A.: On-line construction of the convex hull of a simple polyline.
  Information Processing Letters (IPL)  25(1),  11--12 (Apr 1987)

\bibitem{1992-ACJ-AnOverviewOfAdaptiveSorting-MoffatPetersson}
Moffat, A., Petersson, O.: An overview of adaptive sorting. Australian Computer
  Journal  24(2),  70--77 (1992)

\bibitem{1976-JComp-SortingAndSearchingInMultisets-MunroSpira}
Munro, J.I., Spira, P.M.: Sorting and searching in multisets. SIAM Journal on
  Computing (SICOMP)  5(1),  1--8 (1976)

\bibitem{1985-BOOK-ComputationalGeometryAnIntroduction-PreparataShamos}
Preparata, F.P., Shamos, M.I.: Computational Geometry: An Introduction.
  Springer-Verlag (1985)

\bibitem{1988-BOOK-NumericalRecipesInC-PressFlanneryTeukolskyVetterling}
Press, W.H., Flannery, B.P., Teukolsky, S.A., Vetterling, W.T.: Numerical
  Recipes in C: The Art of Scientific Computing. Cambridge University Press,
  New York, NY, USA (1988)

\bibitem{1975-FOCS-ClosestPointProblems-ShamosHoey}
Shamos, M., Hoey, D.: Closest point problems. In: Proceedings of the 16th
  Annual IEEE Symposium on Foundations of Computer Science (FOCS). pp. 151--162
  (1975)

\bibitem{1997-DCG-AdaptivePrecisionFloatingPointArithmeticAndFastRobustGeometricPredicates-Shewchuk}
Shewchuk, J.R.: Adaptive {P}recision {F}loating-{P}oint {A}rithmetic and {F}ast
  {R}obust {G}eometric {P}redicates. Discrete \& Computational Geometry  18(3),
   305--363 (Oct 1997)

\bibitem{1969-NM-GaussianEliminationIsNotOptimal-Strassen}
Strassen, V.: Gaussian elimination is not optimal. Numerische Mathematik
  13(4),  354--356 (1969)

\end{thebibliography}

\end{document}